\newtheorem{theorem}{Theorem}
\newtheorem{example}{Example}
\newtheorem{remark}{Remark}
\newtheorem{corollary}{Corollary}
\newtheorem{lemma}{Lemma}
\newtheorem{proposition}{Proposition}
\begin{document}

\title{On the Dual of Generalized Bent Functions$^{\dag}$}
\author{Jiaxin Wang*, Fang-Wei Fu
\IEEEcompsocitemizethanks{\IEEEcompsocthanksitem Jiaxin Wang and Fang-Wei Fu are with Chern Institute of Mathematics and LPMC, Nankai University, Tianjin 300071, P.R.China, Emails: wjiaxin@mail.nankai.edu.cn, fwfu@nankai.edu.cn.
}
\thanks{$^\dag$This research is supported by the National Key Research and Development Program of China (Grant No. 2018YFA0704703), the National Natural Science Foundation of China (Grant No. 61971243), the Natural Science Foundation of Tianjin (20JCZDJC00610), the Fundamental Research Funds for the Central Universities of China (Nankai University), and the Nankai Zhide Foundation.}
\thanks{*Corresponding author}
\thanks{Manuscript submitted  June 6, 2021}}

\maketitle

\begin{abstract}
   In this paper, we study the dual of generalized bent functions $f: V_{n}\rightarrow \mathbb{Z}_{p^k}$ where $V_{n}$ is an $n$-dimensional vector space over $\mathbb{F}_{p}$ and $p$ is an odd prime, $k$ is a positive integer. It is known that weakly regular generalized bent functions always appear in pairs since the dual of a weakly regular generalized bent function is also a weakly regular generalized bent function. The dual of non-weakly regular generalized bent functions can be generalized bent or not generalized bent. By generalizing the construction of \cite{Cesmelioglu5}, we obtain an explicit construction of generalized bent functions for which the dual can be generalized bent or not generalized bent. We show that the generalized indirect sum construction method given in \cite{Wang} can provide a secondary construction of generalized bent functions for which the dual can be generalized bent or not generalized bent. By using the knowledge on ideal decomposition in cyclotomic field, we prove that $f^{**}(x)=f(-x)$ if $f$ is a generalized bent function and its dual $f^{*}$ is also a generalized bent function. For any non-weakly regular generalized bent function $f$ which satisfies that $f(x)=f(-x)$ and its dual $f^{*}$ is generalized bent, we give a property and as a consequence, we prove that there is no self-dual generalized bent function $f: V_{n}\rightarrow \mathbb{Z}_{p^k}$ if $p\equiv 3 \ (mod \ 4)$ and $n$ is odd. For $p \equiv 1 \ (mod \ 4)$ or $p\equiv 3 \ (mod \ 4)$ and $n$ is even, we give a secondary construction of self-dual generalized bent functions. In the end, we characterize the relations between the generalized bentness of the dual of generalized bent functions and the bentness of the dual of bent functions, as well as the self-duality relations between generalized bent functions and bent functions by the decomposition of generalized bent functions.
\end{abstract}

\begin{IEEEkeywords}
Generalized bent functions; dual; self-dual; (non)-weakly regular; Walsh transform
\end{IEEEkeywords}

\section{Introduction}
Throughout this paper, let $p$ be an odd prime, $\mathbb{Z}_{p^k}$ be the ring of integers modulo $p^k$, $\mathbb{F}_{p}^{n}$ be the vector space of the $n$-tuples over $\mathbb{F}_{p}$, $\mathbb{F}_{p^n}$ be the finite field with $p^n$ elements and $V_{n}$ be an $n$-dimensional vector space over $\mathbb{F}_{p}$ where $k, n$ are positive integers.

Boolean bent functions introduced by Rothaus \cite{Rothaus} play an important role in cryptography, coding theory, sequences and combinatorics. In 1985, Kumar \emph{et al}. \cite{Kumar} generalized Boolean bent functions to bent functions over finite fields of odd characteristic. Due to the importance of bent functions, they have been studied extensively. There is an exhaustive survey \cite{Carlet} and a book \cite{Mesnager1} for bent functions and generalized bent functions. Recently, the notion of Boolean generalized bent functions has been generalized to generalized bent functions from $V_{n}$ to $\mathbb{Z}_{p^k}$ \cite{Mesnager2}.

A function $f$ from $V_{n}$ to $\mathbb{Z}_{p^k}$ is called a generalized $p$-ary function, or simply $p$-ary function when $k=1$. The Walsh transform of a generalized $p$-ary function $f: V_{n} \rightarrow \mathbb{Z}_{p^k}$ is the complex valued function on $V_{n}$ defined as
\begin{equation}\label{1}
  W_{f}(a)=\sum_{x\in V_{n}}\zeta_{p^k}^{f(x)}\zeta_{p}^{-\langle a, x \rangle}, a \in V_{n}
 \end{equation}
where $\zeta_{p^k}=e^{\frac{2\pi \sqrt{-1}}{p^k}}$ is the complex primitive $p^k$-th root of unity and $\langle a, x \rangle$ denotes a (nondegenerate) inner product in $V_{n}$. And $f$ can be recovered by the inverse transform
 \begin{equation}\label{2}
  \zeta_{p^k}^{f(x)}=\frac{1}{p^n}\sum_{a\in V_{n}}W_{f}(a)\zeta_{p}^{\langle a, x\rangle}, x \in V_{n}.
 \end{equation}
The classical representations of $V_{n}$ are $\mathbb{F}_{p}^{n}$ with $\langle a, x \rangle= a \cdot x$ and $\mathbb{F}_{p^n}$ with $\langle a, x \rangle =Tr_{1}^{n}(ax)$, where $a\cdot x$ is the usual dot product over $\mathbb{F}_{p}^{n}$, $Tr_{1}^{n}(\cdot)$ is the absolute trace function. When $V_{n}=V_{n_{1}}\times \dots \times V_{n_{s}} (n=\sum_{i=1}^{s}n_{i}$), let $\langle a, b\rangle =\sum_{i=1}^{s}\langle a_{i}, b_{i}\rangle $ where $a=(a_{1}, \dots, a_{s}), b=(b_{1}, \dots, b_{s})\in V_{n}$.

The generalized $p$-ary function $f: V_{n}\rightarrow \mathbb{Z}_{p^k}$ is called a generalized $p$-ary bent function, or simply $p$-ary bent function when $k=1$ if $|W_{f}(a)|=p^{\frac{n}{2}}$ for any $a\in V_{n}$. In \cite{Mesnager2}, the authors have shown that the Walsh transform of a generalized $p$-ary bent function $f: V_{n}\rightarrow \mathbb{Z}_{p^k}$ satisfies that for any $a \in V_{n}$,
\begin{equation}\label{3}
  W_{f}(a)=\left\{\begin{split}
                     \pm p^{\frac{n}{2}}\zeta_{p^k}^{f^{*}(a)} & \ \ if \ p \equiv 1 \ (mod \ 4) \ or  \ n \ is \ even,\\
                     \pm \sqrt{-1} p^{\frac{n}{2}}\zeta_{p^k}^{f^{*}(a)} & \ \ if \ p \equiv 3 \ (mod \ 4) \ and  \ n \ is \ odd
                  \end{split}
  \right.
 \end{equation}
where $f^{*}$ is a function from $V_{n}$ to $\mathbb{Z}_{p^k}$ and is called the dual of $f$. We call $f$ self-dual if $f^{*}=f$. In the sequel, if $f^{*}$ is also a generalized bent function, let $f^{**}$ denote the dual of $f^{*}$.

If $f: V_{n}\rightarrow \mathbb{Z}_{p^k}$ is a generalized bent function with dual $f^{*}$, define
\begin{equation}\label{4}
  \mu_{f}(a)=p^{-\frac{n}{2}}\zeta_{p^k}^{-f^{*}(a)}W_{f}(a), a \in V_{n}
\end{equation}
and
\begin{equation}\label{5}
  \varepsilon_{f}(a)=\xi^{-1}\mu_{f}(a), a \in V_{n}
\end{equation}
where $\xi=1$ if $p\equiv 1 \ (mod\ 4)$ or $n$ is even and $\xi=\sqrt{-1}$ if $p\equiv 3 \ (mod\ 4)$ and $n$ is odd.
By (3), it is easy to see that $\varepsilon_{f}$ is a function from $V_{n}$ to $\{\pm 1\}$. If $\mu_{f}$ is a constant function, then $f$ is called weakly regular, otherwise $f$ is called non-weakly regular. In particular, if $\mu_{f}(a)=1, a \in V_{n}$, $f$ is called regular.

There are some works on the dual of $p$-ary bent functions \cite{Cesmelioglu2}, \cite{Cesmelioglu4}, \cite{Cesmelioglu5}, \cite{Ozbudak}. It is known that weakly regular generalized bent functions always appear in pairs since the dual of a weakly regular generalized bent function is also a weakly regular generalized bent function \cite{Mesnager2}. For non-weakly regular bent functions, the dual can be bent or not bent. In \cite{Cesmelioglu4}, \c{C}e\c{s}melio\u{g}lu \emph{et al.} analysed the first known construction of non-weakly regular bent functions \cite{Cesmelioglu1} and showed that this construction yields bent functions for which the dual is bent if using some proper weakly regular plateaued functions or bent functions whose dual is also bent as building blocks. In \cite{Cesmelioglu5}, \c{C}e\c{s}melio\u{g}lu \emph{et al.} provided the first explicit construction of non-weakly regular bent functions for which the dual is not bent. As a consequence, the dual of non-weakly regular generalized bent functions can be generalized bent or not generalized bent. For instance, if $f: V_{n}\rightarrow \mathbb{F}_{p}$ is a non-weakly regular bent function whose dual is (not) bent, then obviously $p^{k-1}f: V_{n}\rightarrow \mathbb{Z}_{p^k}$ is a non-weakly regular generalized bent function whose dual is (not) generalized bent. In \cite{Cesmelioglu4}, the authors also analysed the self-duality of $p$-ary bent functions from $V_{n}$ to $\mathbb{F}_{p}$. They characterized quadratic self-dual bent functions. For $p\equiv 1 \ (mod \ 4)$, they proposed a secondary construction of self-dual bent functions, which can be used to construct non-quadratic self-dual bent functions by using (non)-quadratic self-dual bent functions as building blocks. However, for $p\equiv 3 \ (mod \ 4)$ and $n$ is even, they only illustrated that there exist ternary non-quadratic self-dual bent functions by considering special ternary bent functions. For $p \equiv 3 \ (mod \ 4)$ and $n$ is odd, they showed that there is no weakly regular self-dual bent function and they pointed out that there exist non-weakly regular self-dual bent functions. Indeed, they pointed out that the non-weakly regular bent function $g_{3}(x)=Tr_{1}^{3}(x^{22}+x^{8})$ from $\mathbb{F}_{3^3}$ to $\mathbb{F}_{3}$ is self-dual. But in fact, $g_{3}$ is not self-dual. In this paper, we will prove that there is no self-dual generalized bent function $f: V_{n}\rightarrow \mathbb{Z}_{p^k}$ if $p\equiv 3 \ (mod \ 4)$ and $n$ is odd. For weakly regular bent functions $f$ with dual $f^{*}$, it is known that $f^{**}(x)=f(-x)$ holds where $f^{**}$ denotes the dual of $f^{*}$. Recently, for non-weakly regular bent functions $f$ whose dual $f^{*}$ is also bent, \"{O}zbudak and Pelen \cite{Ozbudak} showed that $f^{**}(x)=f(-x)$ also holds by studying the value distributions of the dual of non-weakly regular bent functions.

In this paper, we study the dual of generalized bent functions from $V_{n}$ to $\mathbb{Z}_{p^k}$. By generalizing the construction of Corollary 2 of \cite{Cesmelioglu5}, we obtain an explicit construction of generalized bent functions for which the dual can be generalized bent or not generalized bent. Note that in \cite{Cesmelioglu5}, a sufficient condition is given to construct non-weakly regular bent functions whose dual is not bent, however, a sufficient condition for the dual of the constructed non-weakly regular bent function to be bent is not given. We will give a sufficient condition to illustrate that the construction of Corollary 2 of \cite{Cesmelioglu5} can also be used to construct non-weakly regular bent functions whose dual is bent. We will show that the generalized indirect sum construction method given in \cite{Wang} can provide a secondary construction of generalized bent functions for which the dual can be generalized bent or not generalized bent. For generalized bent functions $f$ whose dual $f^{*}$ is also generalized bent, different from the proof method in \cite{Ozbudak}, we prove $f^{**}(x)=f(-x)$ by using the knowledge on ideal decomposition in cyclotomic field, which seems more concise. For any non-weakly regular generalized bent function $f$ which satisfies that $f(x)=f(-x)$ and the dual $f^{*}$ is generalized bent, we give a property and as a consequence, we prove that there is no self-dual generalized bent function $f: V_{n}\rightarrow \mathbb{Z}_{p^k}$ if $p\equiv 3 \ (mod \ 4)$ and $n$ is odd. For $p \equiv 1 \ (mod \ 4)$ or $p\equiv 3 \ (mod \ 4)$ and $n$ is even, we give a secondary construction of self-dual generalized bent functions. In the end, by the decomposition of generalized bent functions, we characterize the relations between the generalized bentness of the dual of generalized bent functions and the bentness of the dual of bent functions, as well as the self-duality relations between generalized bent functions and bent functions.

The rest of the paper is organized as follows. In Section II, we present some needed results on generalized bent functions and number fields. In Section III, we give constructions of generalized bent functions whose dual can be generalized bent or not generalized bent (Theorems 1 and 2). In Section IV, for generalized bent functions $f$ whose dual $f^{*}$ is also generalized bent, by using the knowledge on ideal decomposition in cyclotomic field, we prove that $f^{**}(x)=f(-x)$ holds (Theorem 3). In Section V, we prove that there is no self-dual generalized bent function $f: V_{n}\rightarrow \mathbb{Z}_{p^k}$ if $p\equiv 3 \ (mod \ 4)$ and $n$ is odd (Theorem 4). In Section VI, we give a secondary construction of self-dual generalized bent functions $f: V_{n}\rightarrow \mathbb{Z}_{p^k}$ if $p\equiv 1 \ (mod \ 4)$ or $n$ is even (Theorem 5). In Section VII, we characterize the relations between the generalized bentness of the dual of generalized bent functions and the bentness of the dual of bent functions, as well as the self-duality relations between generalized bent functions and bent functions (Theorem 6). In Section VIII, we make a conclusion.
\section{Preliminaries}
\label{sec:1}
In this section, we present some needed results on generalized bent functions and number fields.
\subsection{Some Results on Generalized Bent Functions}
Let $f: V_{n}\rightarrow \mathbb{Z}_{p^k}$ be a generalized bent function, then $W_{f}(a)=\mu_{f}(a)p^{\frac{n}{2}}\zeta_{p^k}^{f^{*}(a)}$ for any $a\in V_{n}$ where $\mu_{f}(a)=\xi \varepsilon_{f}(a)$, $\xi=1$ if $p\equiv 1 \ (mod\ 4)$ or $n$ is even and $\xi=\sqrt{-1}$ if $p\equiv 3 \ (mod\ 4)$ and $n$ is odd, $\varepsilon_{f}: V_{n} \rightarrow \{\pm 1\}$, $f^{*}$ is the dual of $f$. By the inverse transform (2), we have
\begin{equation*}
\begin{split}
  \zeta_{p^k}^{f(x)}&=\frac{1}{p^n}\sum_{a \in V_{n}}W_{f}(a)\zeta_{p}^{\langle a, x\rangle} \\
                    &=\frac{1}{p^n}\sum_{a \in V_{n}}\xi\varepsilon_{f}(a)p^{\frac{n}{2}}\zeta_{p^k}^{f^{*}(a)}\zeta_{p}^{\langle a, x\rangle}\\
                    &=\xi p^{-\frac{n}{2}}\sum_{a \in V_{n}}\varepsilon_{f}(a)\zeta_{p^k}^{f^{*}(a)}\zeta_{p}^{\langle a, x\rangle}.
\end{split}
\end{equation*}
Hence for any $x \in V_{n}$,
\begin{equation}\label{6}
  \xi\sum_{a \in V_{n}}\varepsilon_{f}(a)\zeta_{p^k}^{f^{*}(a)}\zeta_{p}^{\langle a, x\rangle}=p^{\frac{n}{2}}\zeta_{p^k}^{f(x)}.
\end{equation}

In this paper, let $\eta$ be the multiplicative quadratic character of $\mathbb{F}_{p^n}$, that is,
\begin{equation*}
  \eta(x)=\left\{\begin{split}
                   1 & \ \ if \ x \in \mathbb{F}_{p^n}^{*} \ is \ a \ square \ element, \\
                   -1 & \ \ if \ x \in \mathbb{F}_{p^n}^{*} \ is \ a \ non\raisebox{0mm}{-}square \ element.
                 \end{split}\right.
\end{equation*} Let $\alpha \in \mathbb{F}_{p^n}^{*}$. Then the function $f: \mathbb{F}_{p^n}\rightarrow \mathbb{F}_{p}$ defined by $f(x)=Tr_{1}^{n}(\alpha x^{2})$ is a weakly regular bent function and $W_{f}(a)=(-1)^{n-1}\epsilon^{n}\eta(\alpha)p^{\frac{n}{2}}\zeta_{p}^{Tr_{1}^{n}(-\frac{a^2}{4\alpha})}$ for any $a \in \mathbb{F}_{p^n}$ where $\epsilon=1$ if $p\equiv 1 \ (mod \ 4)$ and $\epsilon=\sqrt{-1}$ if $p\equiv 3 \ (mod \ 4)$ (see \cite{Helleseth}).

In \cite{Mesnager2}, the authors characterized the relations between generalized bent functions and bent functions by the decomposition of generalized bent functions. The following lemma follows from Theorem 16 and its proof of \cite{Mesnager2}.
\begin{lemma} [\cite{Mesnager2}]
Let $k\geq 2$ be an integer. Let $f: V_{n}\rightarrow \mathbb{Z}_{p^k}$ with the decomposition $f(x)=\sum_{i=0}^{k-1}f_{i}(x)p^{k-1-i}=f_{0}(x)p^{k-1}+\bar{f}_{1}(x)$ where $f_{i}$ is a function from $V_{n}$ to $\mathbb{F}_{p}$ for any $0\leq i \leq k-1$ and $\bar{f}_{1}(x)=\sum_{i=1}^{k-1}f_{i}(x)p^{k-1-i}$ is a function from $V_{n}$ to $\mathbb{Z}_{p^{k-1}}$. Then $f$ is a generalized bent function if and only if $g_{f, F}\triangleq f_{0}+F(f_{1}, \dots, f_{k-1})$ is a bent function for any function $F: \mathbb{F}_{p}^{k-1}\rightarrow \mathbb{F}_{p}$. Furthermore, if $f$ is a generalized bent function, then $f^{*}(x)=f_{0}^{*}(x)p^{k-1}+\lambda(x)$ and $g_{f, F}^{*}(x)=f_{0}^{*}(x)+F(\lambda_{1}(x), \dots, \lambda_{k-1}(x))$ where $\lambda(x)=\sum_{i=1}^{k-1}\lambda_{i}(x)p^{k-1-i}, \lambda_{i}: V_{n}\rightarrow \mathbb{F}_{p}$ and $\lambda: V_{n}\rightarrow \mathbb{Z}_{p^{k-1}}$ satisfies that for any $a \in V_{n}$, $\sum_{x \in V_{n}: \bar{f}_{1}(x)=\lambda(a)}\zeta_{p}^{f_{0}(x)-\langle a, x\rangle}$ $=W_{f_{0}}(a)$ and $\sum_{x \in V_{n}:\bar{f}_{1}(x)=v}\zeta_{p}^{f_{0}(x)-\langle a, x\rangle}=0$ for any $v \in \mathbb{Z}_{p^{k-1}}$ with $v \neq \lambda(a)$.
\end{lemma}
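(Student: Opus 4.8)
The plan is to reduce the whole statement to the behaviour of the partial Walsh sums obtained by grouping the inputs according to the top--digit decomposition, and then to show that generalized bentness is equivalent to a single such sum surviving for each $a$. First I would rewrite the Walsh transform using $\zeta_{p^k}^{f(x)}=\zeta_{p^k}^{f_0(x)p^{k-1}}\zeta_{p^k}^{\bar f_1(x)}=\zeta_{p}^{f_0(x)}\zeta_{p^k}^{\bar f_1(x)}$ (since $\zeta_{p^k}^{p^{k-1}}=\zeta_p$), and partition $V_n$ according to the value $v=\bar f_1(x)\in\mathbb{Z}_{p^{k-1}}$. Writing $m=p^{k-1}$ and $S_v(a)=\sum_{x\in V_n:\,\bar f_1(x)=v}\zeta_p^{f_0(x)-\langle a,x\rangle}\in\mathbb{Z}[\zeta_p]$, one obtains the two parallel identities
\[
W_f(a)=\sum_{v=0}^{m-1}S_v(a)\,\zeta_{p^k}^{v},\qquad W_{g_{f,F}}(a)=\sum_{v=0}^{m-1}\zeta_p^{F(v_1,\dots,v_{k-1})}S_v(a),
\]
where $(v_1,\dots,v_{k-1})$ are the base-$p$ digits of $v$ (so $F$ is constant on each block). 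Thus both spectra are assembled from the same blocks $S_v(a)$, and $F\equiv 0$ recovers $g_{f,0}=f_0$ with $W_{f_0}(a)=\sum_v S_v(a)$.

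The key reduction is to the claim that $f$ is generalized bent if and only if, for every $a$, there is a unique index $\lambda(a)$ with $S_{\lambda(a)}(a)\neq 0$, in which case $|S_{\lambda(a)}(a)|=p^{n/2}$. Granting this, everything else follows by substitution. If only $S_{\lambda(a)}(a)$ survives, then $W_f(a)=S_{\lambda(a)}(a)\zeta_{p^k}^{\lambda(a)}$ and $W_{g_{f,F}}(a)=S_{\lambda(a)}(a)\zeta_p^{F(\lambda_1(a),\dots,\lambda_{k-1}(a))}$ for every $F$, so all of these functions have constant modulus $p^{n/2}$ simultaneously (the easy direction). Taking $F\equiv 0$ gives $S_{\lambda(a)}(a)=W_{f_0}(a)=\mu_{f_0}(a)p^{n/2}\zeta_p^{f_0^{*}(a)}$, so $f_0$ is bent; feeding this back and reading off the exponent of $\zeta_{p^k}$ yields $f^{*}(a)=f_0^{*}(a)p^{k-1}+\lambda(a)$ and $g_{f,F}^{*}(a)=f_0^{*}(a)+F(\lambda_1(a),\dots,\lambda_{k-1}(a))$, where the $\lambda_i(a)$ are the digits of $\lambda(a)$. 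This is exactly the asserted ``Furthermore'' part, including $\sum_{x:\,\bar f_1(x)=\lambda(a)}\zeta_p^{f_0(x)-\langle a,x\rangle}=W_{f_0}(a)$ and the vanishing of $S_v(a)$ for $v\neq\lambda(a)$.

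It remains to prove the hard half of the claim: if $f$ is generalized bent, the support of $(S_v(a))_v$ is a singleton for each $a$. I would use that $\{\zeta_{p^k}^{v}:0\le v<m\}$ is a $\mathbb{Q}(\zeta_p)$-basis of $\mathbb{Q}(\zeta_{p^k})$, being the set of roots of the minimal polynomial $X^{m}-\zeta_p$ of $\zeta_{p^k}$ over $\mathbb{Q}(\zeta_p)$, so the $S_v(a)\in\mathbb{Q}(\zeta_p)$ are genuine coordinates of $W_f(a)$. Since $\mathrm{Gal}(\mathbb{Q}(\zeta_{p^k})/\mathbb{Q})$ is abelian, each conjugate $\sigma_t(W_f(a))=\sum_v S_v(a)\zeta_{p^k}^{tv}$ with $t\equiv 1\ (\mathrm{mod}\ p)$ again satisfies $|\sigma_t(W_f(a))|^2=\sigma_t(|W_f(a)|^2)=p^n$. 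Substituting $t=1+ps$ turns these $m$ identities into the statement that $d_v=S_v(a)\zeta_{p^k}^{v}$ has constant-modulus discrete Fourier transform over $\mathbb{Z}_{m}$; unwinding the resulting vanishing of all nonzero-lag periodic autocorrelations yields $\sum_v|S_v(a)|^2=p^n$ together with the symmetry relations $\overline{A_\tau}=-\zeta_p A_{m-\tau}$ for $A_\tau=\sum_v S_v(a)\overline{S_{v-\tau}(a)}$.

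The main obstacle is precisely to pass from these relations to a singleton support: the autocorrelation conditions alone never suffice, as flat-spectrum sequences are abundant over $\mathbb{C}$. The essential extra input is that each $S_v(a)$ is an algebraic integer in $\mathbb{Z}[\zeta_p]$, so that a nonzero block satisfies $|N_{\mathbb{Q}(\zeta_p)/\mathbb{Q}}(S_v(a))|\ge 1$ and hence contributes a genuine positive amount to $\sum_v|S_v(a)|^2=p^n$, and that $m=p^{k-1}$ is odd. Applying the symmetry relation at the extremal lag $\tau^{*}=\mathrm{diam}(\mathrm{supp})$, where $A_{\tau^{*}}$ is a single product and so cannot cancel, forces a companion difference at lag $m-\tau^{*}$ and, using the oddness of $m$, already excludes two-element supports; the clean way to settle the general case is induction on $k$, peeling off the top digit $f_0$ and applying the level-$(k-1)$ statement to $\bar f_1$, with integrality in $\mathbb{Z}[\zeta_p]$ ruling out cancellation across distinct blocks. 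This concentration step is where essentially all the work lies, since the subsequent assembly of the dual formulas is purely formal.
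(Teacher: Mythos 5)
Your setup is sound: the block decomposition $W_f(a)=\sum_{v=0}^{m-1}S_v(a)\zeta_{p^k}^{v}$ with $S_v(a)\in\mathbb{Z}[\zeta_p]$, the parallel identity for $W_{g_{f,F}}$, the fact that $\{\zeta_{p^k}^{v}:0\le v<m\}$ is a $\mathbb{Q}(\zeta_p)$-basis of $\mathbb{Q}(\zeta_{p^k})$, and the derivation of all the dual formulas once concentration on a single block is known, are all correct. But the proposal has two genuine gaps. The first is structural: your key claim is ``$f$ generalized bent $\Leftrightarrow$ singleton support,'' and from singleton support you deduce that every $g_{f,F}$ is bent; together these give only ``$f$ generalized bent $\Rightarrow$ every $g_{f,F}$ bent.'' The converse implication of the lemma --- if $g_{f,F}$ is bent for \emph{every} $F$, then $f$ is generalized bent --- is never addressed and does not ``follow by substitution'' from your claim. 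It needs its own argument, e.g.: since $F$ is an arbitrary function on $\mathbb{F}_p^{k-1}$, the phases $\zeta_p^{F(v)}$ may be chosen independently block by block; writing $A=\sum_{v\neq w}c_vS_v(a)$ and averaging $|A+c_wS_w(a)|^2=p^n$ over $c_w\in\{1,\zeta_p,\dots,\zeta_p^{p-1}\}$ forces $\mathrm{Re}\bigl(c_wS_w(a)\overline{A}\bigr)=0$ for all $c_w$, hence $S_w(a)\overline{A}=0$ (here $p\ge 3$ is used), and iterating this shows at most one block is nonzero. This is a different mechanism from your Galois argument and must be supplied.

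The second gap is that the hard direction is not actually proved. The Galois/DFT reduction (conjugates $\sigma_{1+ps}$ fix each $S_v(a)$ and give a flat-spectrum condition on $d_v=S_v(a)\zeta_{p^k}^v$ over $\mathbb{Z}_m$) is correct, and your extremal-lag argument does rigorously exclude supports of size two using that $m=p^{k-1}$ is odd. But for larger supports you offer only an ``induction on $k$, peeling off the top digit and applying the level-$(k-1)$ statement to $\bar f_1$,'' and this cannot be run as stated: the inductive hypothesis concerns generalized bent functions into $\mathbb{Z}_{p^{k-1}}$, and nothing tells you that $\bar f_1$ (or anything built from it) is generalized bent --- the sums $\sum_{x}\zeta_{p^{k-1}}^{\bar f_1(x)}\zeta_p^{-\langle a,x\rangle}$ are not controlled by the hypothesis on $f$. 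Moreover, the appeal to $|N_{\mathbb{Q}(\zeta_p)/\mathbb{Q}}(S_v(a))|\ge 1$ gives no useful lower bound on $|S_v(a)|$, since the norm is a product over conjugates, one of which may be small. The missing idea is the structure theorem for Walsh coefficients of generalized bent functions, stated as equation (3) in this paper (and proved in \cite{Mesnager2}, which is where this lemma comes from): $W_f(a)=\pm\xi p^{n/2}\zeta_{p^k}^{f^*(a)}$. Writing $f^*(a)=jp^{k-1}+v_0$ with $0\le v_0<p^{k-1}$, the right-hand side is $\bigl(\pm\xi p^{n/2}\zeta_p^{j}\bigr)\zeta_{p^k}^{v_0}$, and $\pm\xi p^{n/2}\zeta_p^{j}\in\mathbb{Q}(\zeta_p)$ (for odd $n$ because $\xi\sqrt p$ is a quadratic Gauss sum in $\mathbb{Z}[\zeta_p]$); comparing coordinates in the very $\mathbb{Q}(\zeta_p)$-basis you introduced yields $S_v(a)=0$ for $v\neq v_0$ at once. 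In other words, the concentration step you identify as ``where essentially all the work lies'' is an immediate consequence of (3) plus your basis observation; trying to rederive it from autocorrelation identities and integrality alone is exactly where the proposal stalls, and is in effect an attempt to reprove (3) without its ideal-theoretic/Kronecker ingredients.
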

\subsection{Some Results on Number Fields}
In this subsection, we give some results on number fields. For more results on number fields, we refer to \cite{Feng}.

Suppose $L/K$ is a number field extension. Let $\mathcal{O}_{L}$ and $\mathcal{O}_{K}$ be the ring of integers in $L$ and $K$ respectively.
Any nonzero ideal $I$ of $\mathcal{O}_{L}$ can be uniquely (up to the order) expressed as
\begin{equation*}
  I=\mathfrak{B}_{1}^{e_{1}}\dots \mathfrak{B}_{g}^{e_{g}}
\end{equation*}
where $\mathfrak{B}_{1}, \dots, \mathfrak{B}_{g}$ are distinct (nonzero) prime ideals of $\mathcal{O}_{L}$ and $e_{i}\geq 1 \ (1\leq i \leq g)$. And when $I=\mathfrak{p}\mathcal{O}_{L}$ where $\mathfrak{p}$ is a nonzero prime ideal of $\mathcal{O}_{K}$, $\mathfrak{p}=\mathfrak{B}_{i}\cap \mathcal{O}_{K}$ for any $1\leq i \leq g$.

In this paper, we consider cyclotomic field $L=\mathbb{Q}(\zeta_{p^k})$. Then $\mathcal{O}_{L}=\mathbb{Z}[\zeta_{p^k}]$ and $\{\zeta_{p}^{i}\zeta_{p^k}^{j}: 0\leq i \leq p-2, 0\leq j \leq p^{k-1}-1\}$ is an integral basis of $\mathcal{O}_{L}$. For prime ideal $p\mathbb{Z}$ of $\mathbb{Z}$, $(p\mathbb{Z})\mathcal{O}_{L}=p\mathcal{O}_{L}$ and
\begin{equation*}
   p\mathcal{O}_{L}=((1-\zeta_{p^k})\mathcal{O}_{L})^{\varphi(p^k)}
\end{equation*}
where $(1-\zeta_{p^k})\mathcal{O}_{L}$ is a prime ideal of $\mathcal{O}_{L}$ and $\varphi$ is the Euler function.

For any integer $j\in \mathbb{Z}$, let $v_{p}(j)\in \mathbb{Z}$ denote the $p$-adic valuation of $j$, that is, $p^{v_{p}(j)}\mid j$ and $p^{v_{p}(j)+1}\nmid j$.
\section{Constructions of generalized bent functions whose dual is (not) generalized bent}
In this section, we give constructions of generalized bent functions whose dual can be generalized bent or not generalized bent.

The following theorem gives an explicit construction of generalized bent functions for which the dual can be generalized bent or not generalized bent, which is a generalization of Corollary 2 of \cite{Cesmelioglu5}. Note that in Corollary 2 of \cite{Cesmelioglu5}, a sufficient condition is given to construct non-weakly regular bent functions whose dual is not bent, however, a sufficient condition for the dual of the constructed non-weakly regular bent function to be bent is not given. In the following theorem, we also give a sufficient condition to illustrate that the construction of \cite{Cesmelioglu5} can also be used to construct non-weakly regular bent functions whose dual is bent.
\begin{theorem}
Let $m\geq 2$ be an integer. Let $\alpha, \beta \in \mathbb{F}_{p^m}^{*}$ satisfy $1+i \alpha +j \beta \neq 0$ for any $i, j \in \mathbb{F}_{p}$. Let $F(x, y_{1}, y_{2})=p^{k-1}(Tr_{1}^{m}(x^{2})+(y_{1}+Tr_{1}^{m}(\alpha x^{2}))(y_{2}+Tr_{1}^{m}(\beta x^{2})))+g(y_{2}+Tr_{1}^{m}(\beta x^{2})), (x, y_{1}, y_{2})\in \mathbb{F}_{p^m} \times \mathbb{F}_{p} \times \mathbb{F}_{p}$, where $g$ is an arbitrary function from $\mathbb{F}_{p}$ to $\mathbb{Z}_{p^{k}}$. Then

1) $F$ is a generalized bent function and it is weakly regular if and only if $\eta(1+i \alpha +j \beta)=1$  for any $i, j \in \mathbb{F}_{p}$.

2) The dual $F^{*}(x, y_{1}, y_{2})=p^{k-1}(Tr_{1}^{m}(-\frac{x^{2}}{4(1+y_{1}\alpha+y_{2}\beta)})-y_{1}y_{2})+g(y_{1})$ and $F^{*}$ is a generalized bent function if and only if
      \begin{equation}\label{7}
        |\sum_{y_{1}, y_{2} \in \mathbb{F}_{p}}\eta(1+y_{1} \alpha +y_{2} \beta)\zeta_{p^k}^{g(y_{1})}\zeta_{p}^{-y_{1}y_{2}+b_{1}y_{1}+b_{2}y_{2}}|=p \ \ for \ any \  b_{1}, b_{2} \in \mathbb{F}_{p}.
      \end{equation}
  In particular, if $\eta(1+i \alpha +j \beta)=\eta(1+i\alpha)$ for any $i, j \in \mathbb{F}_{p}$ and there exist $i_{1}, i_{2}\in \mathbb{F}_{p}$ such that $\eta(1+i_{1}\alpha)\neq \eta(1+i_{2}\alpha)$, then $F^{*}$ is a non-weakly regular generalized bent function.
\end{theorem}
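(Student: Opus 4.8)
The plan is to evaluate the Walsh transforms of $F$ and of its dual directly, in each case reducing the inner exponential sum to the quadratic Gauss sum $\sum_{x}\zeta_{p}^{Tr_{1}^{m}(cx^{2})-Tr_{1}^{m}(ax)}=(-1)^{m-1}\epsilon^{m}\eta(c)p^{\frac{m}{2}}\zeta_{p}^{Tr_{1}^{m}(-\frac{a^{2}}{4c})}$ recorded in Section II (with $n$ there taken to be $m$). Write $V_{n}=\mathbb{F}_{p^m}\times\mathbb{F}_{p}\times\mathbb{F}_{p}$, so $n=m+2$, with $\langle(a,b_{1},b_{2}),(x,y_{1},y_{2})\rangle=Tr_{1}^{m}(ax)+b_{1}y_{1}+b_{2}y_{2}$. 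Using $\zeta_{p^k}^{p^{k-1}t}=\zeta_{p}^{t}$ and the invertible substitution $u_{1}=y_{1}+Tr_{1}^{m}(\alpha x^{2})$, $u_{2}=y_{2}+Tr_{1}^{m}(\beta x^{2})$ (a bijection in $(y_{1},y_{2})$ for each fixed $x$), the exponent of $F$ becomes $p^{k-1}(Tr_{1}^{m}(x^{2})+u_{1}u_{2})+g(u_{2})$, and the transform factors as
\[
W_{F}(a,b_{1},b_{2})=\sum_{u_{1},u_{2}}\zeta_{p}^{u_{1}u_{2}-b_{1}u_{1}-b_{2}u_{2}}\zeta_{p^k}^{g(u_{2})}\sum_{x}\zeta_{p}^{Tr_{1}^{m}((1+b_{1}\alpha+b_{2}\beta)x^{2})-Tr_{1}^{m}(ax)}.
\]

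The hypothesis $1+i\alpha+j\beta\neq0$ forces $c:=1+b_{1}\alpha+b_{2}\beta\in\mathbb{F}_{p^m}^{*}$, so the inner sum equals $(-1)^{m-1}\epsilon^{m}\eta(c)p^{\frac{m}{2}}\zeta_{p}^{Tr_{1}^{m}(-\frac{a^{2}}{4c})}$, while the $u_{1}$-sum collapses by orthogonality to the single term $u_{2}=b_{1}$, leaving $p\,\zeta_{p}^{-b_{1}b_{2}}\zeta_{p^k}^{g(b_{1})}$. Collecting the factors and rewriting $\zeta_{p}^{t}=\zeta_{p^k}^{p^{k-1}t}$ gives $W_{F}(a,b_{1},b_{2})=(-1)^{m-1}\epsilon^{m}\eta(1+b_{1}\alpha+b_{2}\beta)\,p^{\frac{n}{2}}\,\zeta_{p^k}^{F^{*}(a,b_{1},b_{2})}$, with $F^{*}$ exactly the stated expression. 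Since $|W_{F}|=p^{\frac{n}{2}}$, the function $F$ is generalized bent and $\mu_{F}(a,b_{1},b_{2})=(-1)^{m-1}\epsilon^{m}\eta(1+b_{1}\alpha+b_{2}\beta)$. As $\mu_{F}$ is constant exactly when $\eta(1+i\alpha+j\beta)$ is constant in $(i,j)$, and this value equals $\eta(1)=1$ at $(i,j)=(0,0)$, weak regularity is equivalent to $\eta(1+i\alpha+j\beta)=1$ for all $i,j$; this proves 1) and the dual formula in 2).

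To decide whether $F^{*}$ is generalized bent, I would compute $W_{F^{*}}$ from the explicit dual, applying the same Gauss-sum formula to the inner $x$-sum now with $c=-\frac{1}{4(1+y_{1}\alpha+y_{2}\beta)}$, and using $\eta(c)=\eta(-1)\eta(1+y_{1}\alpha+y_{2}\beta)$ together with $-\frac{a^{2}}{4c}=a^{2}(1+y_{1}\alpha+y_{2}\beta)$. After extracting the $a$-dependent phase one obtains $|W_{F^{*}}(a,b_{1},b_{2})|=p^{\frac{m}{2}}\,|S(B_{1},B_{2})|$, where
\[
S(B_{1},B_{2})=\sum_{y_{1},y_{2}}\eta(1+y_{1}\alpha+y_{2}\beta)\zeta_{p^k}^{g(y_{1})}\zeta_{p}^{-y_{1}y_{2}+B_{1}y_{1}+B_{2}y_{2}},\quad B_{1}=Tr_{1}^{m}(a^{2}\alpha)-b_{1},\ B_{2}=Tr_{1}^{m}(a^{2}\beta)-b_{2}.
\]
For fixed $a$ the pair $(B_{1},B_{2})$ runs over all of $\mathbb{F}_{p}\times\mathbb{F}_{p}$ as $(b_{1},b_{2})$ vary, so $F^{*}$ is generalized bent if and only if $|S(b_{1},b_{2})|=p$ for all $b_{1},b_{2}$, which is precisely condition \eqref{7}.

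The main care will be in the phase/sign bookkeeping needed to read off $\mu_{F}$ and $F^{*}$ in the normalization of \eqref{3}--\eqref{5}, together with the observation that quantifying bentness over all of $V_{n}$ collapses to the two-variable condition \eqref{7}; the computations themselves are routine once the Gauss sum is invoked. For the final assertion, assume $\eta(1+i\alpha+j\beta)=\eta(1+i\alpha)$ for all $i,j$. Then the character factor in $S$ is independent of $y_{2}$, so the $y_{2}$-sum collapses by orthogonality to $y_{1}=B_{2}$, giving $S=p\,\eta(1+B_{2}\alpha)\zeta_{p^k}^{g(B_{2})}\zeta_{p}^{B_{1}B_{2}}$; hence $|S|=p$, condition \eqref{7} holds, and $F^{*}$ is generalized bent. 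Tracking the surviving constant yields $\mu_{F^{*}}(a,b_{1},b_{2})=(-1)^{m-1}\epsilon^{m}\eta(-1)\eta(1+B_{2}\alpha)$, and since $B_{2}$ ranges over all of $\mathbb{F}_{p}$ while $\eta(1+i_{1}\alpha)\neq\eta(1+i_{2}\alpha)$ for some $i_{1},i_{2}$, the factor $\eta(1+B_{2}\alpha)$ is non-constant. Therefore $\mu_{F^{*}}$ is non-constant and $F^{*}$ is non-weakly regular.
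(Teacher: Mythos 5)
Your proposal is correct and follows essentially the same route as the paper's proof: the same change of variables $u_{1}=y_{1}+Tr_{1}^{m}(\alpha x^{2})$, $u_{2}=y_{2}+Tr_{1}^{m}(\beta x^{2})$ to factor $W_{F}$ into a quadratic Gauss sum times a Maiorana--McFarland-type sum, the same identification of $\mu_{F}$ and $F^{*}$, the same reduction of the generalized bentness of $F^{*}$ to condition (7) via the substitution $B_{1}=Tr_{1}^{m}(a^{2}\alpha)-b_{1}$, $B_{2}=Tr_{1}^{m}(a^{2}\beta)-b_{2}$, and the same orthogonality collapse of the $y_{2}$-sum in the special case. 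The only cosmetic difference is that you evaluate the two-variable sum directly by orthogonality where the paper cites the generalized Maiorana--McFarland bent function formula, which is the same computation.
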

\begin{proof}
1) For any $(a, b_{1}, b_{2})\in \mathbb{F}_{p^m}\times \mathbb{F}_{p} \times \mathbb{F}_{p}$, we have
\begin{equation*}
  \begin{split}
    &W_{F}(a, b_{1}, b_{2})\\
    & =\sum_{x \in \mathbb{F}_{p^m}, y_{1}, y_{2} \in \mathbb{F}_{p}}\zeta_{p^k}^{F(x, y_{1}, y_{2})}\zeta_{p}^{-Tr_{1}^{m}(ax)-b_{1}y_{1}-b_{2}y_{2}}\\
    & =\sum_{x \in \mathbb{F}_{p^m}}\zeta_{p}^{Tr_{1}^{m}(x^{2})-Tr_{1}^{m}(ax)}\sum_{y_{1}, y_{2}\in \mathbb{F}_{p}}\zeta_{p^k}^{g(y_{2}+Tr_{1}^{m}(\beta x^{2}))}\zeta_{p}^{(y_{1}+Tr_{1}^{m}(\alpha x^{2}))(y_{2}+Tr_{1}^{m}(\beta x^{2}))-b_{1}y_{1}-b_{2}y_{2}} \\
    & =\sum_{x \in \mathbb{F}_{p^m}}\zeta_{p}^{Tr_{1}^{m}(\gamma_{b}x^{2}-ax)}\sum_{z_{1}, z_{2}\in \mathbb{F}_{p}}\zeta_{p}^{z_{1}z_{2}-b_{1}z_{1}-b_{2}z_{2}}\zeta_{p^k}^{g(z_{2})}
    \end{split}
\end{equation*}
\ \ \ \ \ \ \ \ \ $=W_{Tr_{1}^{m}(\gamma_{b}x^{2})}(a)W_{p^{k-1}z_{1}z_{2}+g(z_{2})}(b_{1}, b_{2})$
\\where $\gamma_{b}=1+b_{1}\alpha +b_{2}\beta$ and in the third equality we use the change of variables $z_{1}=y_{1}+Tr_{1}^{m}(\alpha x^{2})$, $z_{2}=y_{2}+Tr_{1}^{m}(\beta x^{2})$. Since $\gamma_{b}\neq 0$, then $Tr_{1}^{m}(\gamma_{b}x^{2})$ is a bent function and $W_{Tr_{1}^{m}(\gamma_{b}x^{2})}(a)=(-1)^{m-1}\epsilon^{m}\eta(\gamma_{b})p^{\frac{m}{2}}\zeta_{p}^{Tr_{1}^{m}(-\frac{a^{2}}{4\gamma_{b}})}$ where $\epsilon=1$ if $p\equiv 1\ (mod \ 4)$ and $\epsilon=\sqrt{-1}$ if $p\equiv 3 \ (mod \ 4)$. Since $p^{k-1}z_{1}z_{2}+g(z_{2})$ is a generalized Maiorana-McFarland bent function, then $W_{p^{k-1}z_{1}z_{2}+g(z_{2})}(b_{1}, b_{2})=p \zeta_{p^k}^{-p^{k-1}b_{1}b_{2}+g(b_{1})}$. Hence $F$ is generalized bent and
\begin{equation}\label{8}
  W_{F}(a, b_{1}, b_{2})=\mu_{F}(a, b_{1}, b_{2})p^{\frac{m+2}{2}}\zeta_{p^k}^{F^{*}(a, b_{1}, b_{2})}
\end{equation}
where $\mu_{F}(a, b_{1}, b_{2})=(-1)^{m-1}\epsilon^{m}\eta(\gamma_{b})$ and its dual $F^{*}(a, b_{1}, b_{2})=p^{k-1}(Tr_{1}^{m}(-\frac{a^{2}}{4\gamma_{b}})-b_{1}b_{2})+g(b_{1})$.

From (8), it is obviously that $F$ is weakly regular if and only if $\eta(\gamma_{b})=\eta(1+b_{1}\alpha+b_{2} \beta), b_{1}, b_{2} \in \mathbb{F}_{p}$ are the same. By $\eta(1)=1$, we have $F$ is weakly regular if and only if $\eta(\gamma_{b})=1$ for any $b_{1}, b_{2} \in \mathbb{F}_{p}$.

2) By $F^{*}(x, y_{1}, y_{2})=p^{k-1}(Tr_{1}^{m}(-\frac{x^{2}}{4\gamma_{y}})-y_{1}y_{2})+g(y_{1})$ where $\gamma_{y}=1+y_{1}\alpha+y_{2}\beta$, for any $(a, b_{1}, b_{2})\in \mathbb{F}_{p^m} \times \mathbb{F}_{p} \times \mathbb{F}_{p}$ we have
\begin{equation*}
  \begin{split}
    & W_{F^{*}}(a, b_{1}, b_{2})\\
    & =\sum_{y_{1}, y_{2}\in \mathbb{F}_{p}}\zeta_{p}^{-y_{1}y_{2}-b_{1}y_{1}-b_{2}y_{2}}\zeta_{p^k}^{g(y_{1})}\sum_{x \in \mathbb{F}_{p^m}}\zeta_{p}^{Tr_{1}^{m}(-\frac{x^{2}}{4\gamma_{y}}-ax)}\\
    & =\sum_{y_{1}, y_{2}\in \mathbb{F}_{p}}\zeta_{p}^{-y_{1}y_{2}-b_{1}y_{1}-b_{2}y_{2}}\zeta_{p^k}^{g(y_{1})}(-1)^{m-1}\epsilon^{m}p^{\frac{m}{2}}\eta(-\frac{1}{4\gamma_{y}}) \zeta_{p}^{Tr_{1}^{m}(\gamma_{y}a^{2})}\\
    & = \lambda\sum_{y_{1}, y_{2}\in \mathbb{F}_{p}}\eta(\gamma_{y})\zeta_{p}^{-y_{1}y_{2}-(b_{1}-Tr_{1}^{m}(a^{2}\alpha))y_{1}-(b_{2}-Tr_{1}^{m}(a^{2}\beta))y_{2}}
    \zeta_{p^k}^{g(y_{1})}
  \end{split}
\end{equation*}
where $\lambda=(-1)^{m-1}\epsilon^{m}\eta(-1)p^{\frac{m}{2}}\zeta_{p}^{Tr_{1}^{m}(a^{2})}$,
hence it is easy to see that $F^{*}$ is generalized bent if and only if (7) holds. In particular, if $\eta(1+i \alpha +j \beta)=\eta(1+i\alpha)$ for any $i, j \in \mathbb{F}_{p}$, then for any $b_{1}, b_{2} \in \mathbb{F}_{p}$,
\begin{equation*}
  \begin{split}
     &\sum_{y_{1}, y_{2}\in \mathbb{F}_{p}}\eta(1+y_{1}\alpha+y_{2}\beta)\zeta_{p}^{-y_{1}y_{2}+b_{1}y_{1}+b_{2}y_{2}}\zeta_{p^k}^{g(y_{1})} \\
     &=\sum_{y_{1} \in \mathbb{F}_{p}}\eta(1+y_{1}\alpha)\zeta_{p}^{b_{1}y_{1}}\zeta_{p^k}^{g(y_{1})}\sum_{y_{2}\in \mathbb{F}_{p}}\zeta_{p}^{(b_{2}-y_{1})y_{2}}\\
     &=\eta(1+b_{2}\alpha) p \zeta_{p}^{b_{1}b_{2}}\zeta_{p^k}^{g(b_{2})},
  \end{split}
  \end{equation*}
thus (7) holds and $F^{*}$ is generalized bent. Furthermore, if there exist $i_{1}, i_{2}\in \mathbb{F}_{p}$ such that $\eta(1+i_{1}\alpha)\neq \eta(1+i_{2}\alpha)$, it is easy to see that $F^{*}$ is non-weakly regular.
\end{proof}

\begin{remark}
When $k=1$, $g=0$ and $1, \alpha , \beta\in \mathbb{F}_{p^m}$ are linearly independent over $\mathbb{F}_{p}$, the above construction reduces to Corollary 2 of \cite{Cesmelioglu5}, which is the first explicit construction of bent functions whose dual is not bent. As for general $\alpha, \beta\in \mathbb{F}_{p^m}^{*}$ with $1+i \alpha +j \beta \neq 0, i, j \in \mathbb{F}_{p}$, the condition (7) does not seem to hold in general, one can easily obtain non-weakly regular generalized bent functions whose dual is not generalized bent. But we emphasize that if $\alpha, \beta \in \mathbb{F}_{p^m}^{*}$ with $1+i \alpha +j \beta \neq 0, i, j \in \mathbb{F}_{p}$ satisfy that $\eta(1+i \alpha+j \beta)=\eta(1+i \alpha)$ for any $i, j \in \mathbb{F}_{p}$ and $\eta (1+i\alpha), i\in \mathbb{F}_{p}$ are not all the same, then the function constructed by the above construction is a non-weakly regular generalized bent function whose dual is also generalized bent.
\end{remark}

We give two examples of non-weakly regular generalized bent functions by using Theorem 1 for which the dual of the first example is not generalized bent and the dual of the second example is generalized bent.

\begin{example}
Let $p=5$, $m=2$, $k=3$. Let $z$ be the primitive element of $\mathbb{F}_{5^2}$ with $z^{2}+4z+2=0$. Let $\alpha=\beta=z$, $g: \mathbb{F}_{5}\rightarrow \mathbb{Z}_{5^3}$ be defined as $g(x)=x^{3}$. Then one can verify that the function $F$ constructed by Theorem 1 is a non-weakly regular generalized bent function and its dual $F^{*}$ is not generalized bent.
\end{example}

\begin{example}
Let $p=3$, $m=5$, $k=2$. Let $z$ be the primitive element of $\mathbb{F}_{3^5}$ with $z^{5}+2z+1=0$. Let $\alpha=z^{10}, \beta=z^{47}$, $g: \mathbb{F}_{3}\rightarrow \mathbb{Z}_{3^2}$ be defined as $g(x)=x$. Then one can verify that $\eta(1+j\beta)=1, \eta(1+\alpha+j\beta)=\eta(1+2\alpha+j\beta)=-1$ for any $j \in \mathbb{F}_{3}$, hence the function $F$ constructed by Theorem 1 is a non-weakly regular generalized bent function and its dual $F^{*}$ is generalized bent.
\end{example}

Now we show that the generalized indirect sum construction method given in \cite{Wang} can provide a secondary construction of generalized bent functions for which the dual can be generalized bent or not generalized bent. For the sake of completeness, we give the proof of the following lemma, which can be obtained by Theorem 5 of \cite{Wang}.
\begin{lemma}[\cite{Wang}]
Let $f_{i} (i \in \mathbb{F}_{p}^{t}) : V_{r}\rightarrow \mathbb{Z}_{p^k}$ be generalized bent functions. Let $g_{s} (0\leq s\leq t) : V_{m}\rightarrow \mathbb{F}_{p}$ be bent functions which satisfy that for any $j=(j_{1}, \dots, j_{t})\in \mathbb{F}_{p}^{t}$, $G_{j}\triangleq (1-j_{1}-\dots-j_{t})g_{0}+j_{1}g_{1}+\dots+j_{t}g_{t}$ is a bent function and $G_{j}^{*}=(1-j_{1}-\dots-j_{t})g_{0}^{*}+j_{1}g_{1}^{*}+\dots+j_{t}g_{t}^{*}$ and $\mu_{G_{j}}(y)=u, y \in V_{m}$ where $\mu_{G_{j}}$ is defined by (4) and $u$ is a constant independent of $j$. Let $g$ be an arbitrary function from $\mathbb{F}_{p}^{t}$ to $\mathbb{Z}_{p^k}$. Then $F(x,y)=f_{(g_{0}(y)-g_{1}(y), \dots, g_{0}(y)-g_{t}(y))}(x)+p^{k-1}g_{0}(y)+g(g_{0}(y)-g_{1}(y), \dots, g_{0}(y)-g_{t}(y)), (x, y) \in V_{r}\times V_{m}$ is a generalized bent function.
\end{lemma}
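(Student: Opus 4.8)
The plan is to compute the Walsh transform $W_{F}(a,b)$ directly for an arbitrary $(a,b)\in V_{r}\times V_{m}$ and verify that $|W_{F}(a,b)|=p^{\frac{r+m}{2}}$. Writing $\mathbf{j}(y)=(g_{0}(y)-g_{1}(y),\dots,g_{0}(y)-g_{t}(y))$, I would first factor the double sum over $(x,y)$ and isolate the $x$-sum as a Walsh transform of the inner building block:
\[
W_{F}(a,b)=\sum_{y\in V_{m}}\zeta_{p^k}^{p^{k-1}g_{0}(y)+g(\mathbf{j}(y))}\zeta_{p}^{-\langle b,y\rangle}W_{f_{\mathbf{j}(y)}}(a).
\]
Since each $f_{i}$ is generalized bent, I substitute $W_{f_{i}}(a)=\mu_{f_{i}}(a)p^{\frac{r}{2}}\zeta_{p^k}^{f_{i}^{*}(a)}$, use $\zeta_{p^k}^{p^{k-1}g_{0}(y)}=\zeta_{p}^{g_{0}(y)}$, and regroup the $y$-sum according to the value $j=\mathbf{j}(y)\in\mathbb{F}_{p}^{t}$. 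This pulls the $x$-dependent factors out of the inner sum and reduces everything to evaluating, for each fixed $j$, the quantity $S_{j}(b):=\sum_{y:\,\mathbf{j}(y)=j}\zeta_{p}^{g_{0}(y)-\langle b,y\rangle}$.

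Next I would evaluate $S_{j}(b)$ by expressing the indicator of $\{y:\mathbf{j}(y)=j\}$ as the character sum $p^{-t}\sum_{c\in\mathbb{F}_{p}^{t}}\zeta_{p}^{\sum_{s}c_{s}(g_{0}(y)-g_{s}(y)-j_{s})}$. The crux of the whole argument is the observation that the resulting exponent in $y$, namely $(1+\sum_{s}c_{s})g_{0}(y)-\sum_{s}c_{s}g_{s}(y)$, is exactly $G_{-c}(y)$, so the inner $y$-sum is $W_{G_{-c}}(b)$. Because $-c\in\mathbb{F}_{p}^{t}$, all the hypotheses apply to $G_{-c}$: it is bent with $W_{G_{-c}}(b)=u\,p^{\frac{m}{2}}\zeta_{p}^{G_{-c}^{*}(b)}$ and $G_{-c}^{*}(b)=(1+\sum_{s}c_{s})g_{0}^{*}(b)-\sum_{s}c_{s}g_{s}^{*}(b)$, where the constancy $\mu_{G_{-c}}=u$ is precisely what lets me pull $u$ out of the $c$-sum.

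Substituting this back, the $g_{0}^{*}(b)$ contribution factors out and the remaining sum over $c$ splits as a product of $t$ complete character sums $\sum_{c_{s}}\zeta_{p}^{c_{s}(g_{0}^{*}(b)-g_{s}^{*}(b)-j_{s})}$, each equal to $p$ or $0$. Hence $S_{j}(b)=u\,p^{\frac{m}{2}}\zeta_{p}^{g_{0}^{*}(b)}$ when $j=\mathbf{j}^{*}(b):=(g_{0}^{*}(b)-g_{1}^{*}(b),\dots,g_{0}^{*}(b)-g_{t}^{*}(b))$ and $S_{j}(b)=0$ otherwise. Only the single term $j=\mathbf{j}^{*}(b)$ then survives in the outer sum, giving
\[
W_{F}(a,b)=u\,\mu_{f_{\mathbf{j}^{*}(b)}}(a)\,p^{\frac{r+m}{2}}\,\zeta_{p^k}^{\,f_{\mathbf{j}^{*}(b)}^{*}(a)+p^{k-1}g_{0}^{*}(b)+g(\mathbf{j}^{*}(b))}.
\]
Since $|u|=|\mu_{f_{\mathbf{j}^{*}(b)}}(a)|=1$ (both equal a fourth root of unity times $\pm1$), we conclude $|W_{F}(a,b)|=p^{\frac{r+m}{2}}$, so $F$ is generalized bent; as a bonus one reads off the dual $F^{*}(a,b)=f_{\mathbf{j}^{*}(b)}^{*}(a)+p^{k-1}g_{0}^{*}(b)+g(\mathbf{j}^{*}(b))$, mirroring the form of $F$. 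The only real obstacle is spotting the exponent identity $=G_{-c}$; everything else is routine character-sum bookkeeping, and the two technical hypotheses (that $\mu_{G_{j}}$ equals the constant $u$ and that $G_{j}^{*}$ is the prescribed affine combination of the $g_{s}^{*}$) are exactly the ingredients needed to make the $c$-sum collapse to a single surviving index.
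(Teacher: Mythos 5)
Your proof is correct and follows essentially the same route as the paper: both isolate the $x$-sum as $W_{f_{\mathbf{j}(y)}}(a)$, expand the indicator of $\{y:\mathbf{j}(y)=j\}$ as a character sum over $\mathbb{F}_p^t$, recognize the resulting exponent as a Walsh transform of some $G_j$ (your index $-c$ versus the paper's $j$ is just a sign convention), and then use the hypotheses on $G_j^*$ and the constancy of $\mu_{G_j}$ to collapse the character sums so that only the index $(g_0^*(b)-g_1^*(b),\dots,g_0^*(b)-g_t^*(b))$ survives. The dual formula you read off at the end also matches the one the paper extracts from this computation in Theorem 2.
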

\begin{proof}
For any $(a, b) \in V_{r}\times V_{m}$, we have
\begin{equation*}
  \begin{split}
      &W_{F}(a, b)\\
      &=\sum_{x \in V_{r}, y \in V_{m}}\zeta_{p^k}^{f_{(g_{0}(y)-g_{1}(y), \dots, g_{0}(y)-g_{t}(y))}(x)+g(g_{0}(y)-g_{1}(y), \dots, g_{0}(y)-g_{t}(y))}\zeta_{p}^{g_{0}(y)-\langle a, x\rangle -\langle b, y \rangle}\\
      & =\sum_{i_{1}, \dots, i_{t}\in \mathbb{F}_{p}}\sum_{y: g_{0}(y)-g_{j}(y)=i_{j}, 1\leq j\leq t}\sum_{x \in V_{r}}\zeta_{p^k}^{f_{(i_{1}, \dots, i_{t})}(x)+g(i_{1}, \dots, i_{t})}\zeta_{p}^{g_{0}(y)-\langle a, x \rangle-\langle b, y\rangle}\\
      & =p^{-t}\sum_{i_{1}, \dots, i_{t} \in \mathbb{F}_{p}}\zeta_{p^k}^{g(i_{1}, \dots, i_{t})}W_{f_{(i_{1}, \dots, i_{t})}}(a)\sum_{y \in V_{m}}\zeta_{p}^{g_{0}(y)-\langle b, y\rangle}\sum_{j_{1}\in \mathbb{F}_{p}}\zeta_{p}^{(i_{1}-(g_{0}-g_{1})(y))j_{1}}\\
      & \ \ \ \ \ \ \ \dots \sum_{j_{t}\in \mathbb{F}_{p}}\zeta_{p}^{(i_{t}-(g_{0}-g_{t})(y))j_{t}}\\
      & =p^{-t}\sum_{i_{1}, \dots, i_{t} \in \mathbb{F}_{p}}\zeta_{p^k}^{g(i_{1}, \dots, i_{t})}W_{f_{(i_{1}, \dots, i_{t})}}(a)\sum_{j_{1}, \dots, j_{t}\in \mathbb{F}_{p}}\zeta_{p}^{i_{1}j_{1}+\dots+i_{t}j_{t}}W_{G_{(j_{1}, \dots, j_{t})}}(b).
 \end{split}
 \end{equation*}

As for any $j_{1}, \dots, j_{t}\in \mathbb{F}_{p}$, $G_{(j_{1}, \dots, j_{t})}\triangleq (1-j_{1}-\dots-j_{t})g_{0}+j_{1}g_{1}+\dots+j_{t}g_{t}$ is a bent function and $G_{(j_{1}, \dots, j_{t})}^{*}=(1-j_{1}-\dots-j_{t})g_{0}^{*}+j_{1}g_{1}^{*}+\dots+j_{t}g_{t}^{*}$ and $\mu_{G_{(j_{1}, \dots, j_{t})}}(y)=u, y \in V_{m}$ where $u$ is a constant independent of $(j_{1}, \dots, j_{t})$, we have
 \begin{equation}\label{9}
     \begin{split}
      &W_{F}(a, b)\\
      & =u p^{\frac{m}{2}}p^{-t}\sum_{i_{1}, \dots, i_{t} \in \mathbb{F}_{p}}\zeta_{p^k}^{g(i_{1}, \dots, i_{t})}W_{f_{(i_{1}, \dots, i_{t})}}(a)\sum_{j_{1}, \dots, j_{t}\in \mathbb{F}_{p}}\zeta_{p}^{i_{1}j_{1}+\dots+i_{t}j_{t}+(1-j_{1}-\dots-j_{t})g_{0}^{*}(b)+j_{1}g_{1}^{*}(b)+\dots+j_{t}g_{t}^{*}(b)}\\
      & =u p^{\frac{m}{2}}p^{-t}\zeta_{p}^{g_{0}^{*}(b)}\sum_{i_{1}, \dots, i_{t} \in \mathbb{F}_{p}}\zeta_{p^k}^{g(i_{1}, \dots, i_{t})}W_{f_{(i_{1}, \dots, i_{t})}}(a)\sum_{j_{1}\in \mathbb{F}_{p}}\zeta_{p}^{(g_{1}^{*}(b)-g_{0}^{*}(b)+i_{1})j_{1}} \dots \sum_{j_{t}\in \mathbb{F}_{p}}\zeta_{p}^{(g_{t}^{*}(b)-g_{0}^{*}(b)+i_{t})j_{t}}\\
      & =u p^{\frac{m}{2}}\zeta_{p}^{g_{0}^{*}(b)}\zeta_{p^k}^{g(g_{0}^{*}(b)-g_{1}^{*}(b), \dots, g_{0}^{*}(b)-g_{t}^{*}(b))}W_{f_{(g_{0}^{*}(b)-g_{1}^{*}(b), \dots, g_{0}^{*}(b)-g_{t}^{*}(b))}}(a).
  \end{split}
\end{equation}
Hence, by (9), it is easy to see that $F: V_{r}\times V_{m}\rightarrow \mathbb{Z}_{p^k}$ is a generalized bent function if $f_{i}, i\in \mathbb{F}_{p}^{t}$ are generalized bent functions from $V_{r}$ to $\mathbb{Z}_{p^k}$.
\end{proof}

Based on Lemma 2, we give the following theorem.
\begin{theorem}
With the same notation as Lemma 2. The dual of the generalized bent function constructed by Lemma 2 is a generalized bent function if and only if for any $y \in V_{m}$, the dual of $f_{(g_{0}(y)-g_{1}(y), \dots, g_{0}(y)-g_{t}(y))}$ is a generalized bent function.
\end{theorem}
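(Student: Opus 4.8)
The plan is to prove the equivalence by computing the Walsh transform $W_{F^{*}}$ of the dual explicitly and showing that its modulus at every point is a fixed power of $p$ times the modulus of $W_{f_{i}^{*}}$ at a corresponding point, where $i$ runs exactly over the indices $(g_{0}(y)-g_{1}(y),\dots,g_{0}(y)-g_{t}(y))$, $y\in V_{m}$. Since $F$ is generalized bent by Lemma 2, its dual $F^{*}$ is well defined. From (9) together with the generalized bentness of each $f_{i}$, so that $W_{f_{i}}(a)=\mu_{f_{i}}(a)p^{\frac{r}{2}}\zeta_{p^{k}}^{f_{i}^{*}(a)}$, I would first read off, using $\zeta_{p}^{g_{0}^{*}(b)}=\zeta_{p^{k}}^{p^{k-1}g_{0}^{*}(b)}$, that
\[
F^{*}(a,b)=p^{k-1}g_{0}^{*}(b)+g(\phi(b))+f_{\phi(b)}^{*}(a),\qquad \phi(b)=(g_{0}^{*}(b)-g_{1}^{*}(b),\dots,g_{0}^{*}(b)-g_{t}^{*}(b)).
\]

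Next I would substitute this expression into the definition (1) of $W_{F^{*}}(c,d)$ for $(c,d)\in V_{r}\times V_{m}$ and separate the summation over $a\in V_{r}$ from that over $b\in V_{m}$; the inner sum over $a$ is precisely $W_{f_{\phi(b)}^{*}}(c)$. Grouping the remaining sum over $b$ according to the value $i\in\mathbb{F}_{p}^{t}$ of $\phi(b)$ gives
\[
W_{F^{*}}(c,d)=\sum_{i\in\mathbb{F}_{p}^{t}}\zeta_{p^{k}}^{g(i)}W_{f_{i}^{*}}(c)\sum_{b:\,\phi(b)=i}\zeta_{p}^{g_{0}^{*}(b)-\langle d,b\rangle}.
\]
The constraint $\phi(b)=i$ is then linearized by introducing a character sum over $j=(j_{1},\dots,j_{t})\in\mathbb{F}_{p}^{t}$, exactly as in the proof of Lemma 2. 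The decisive feature of this step is that the resulting exponent $(1-\sum_{l}j_{l})g_{0}^{*}(b)+\sum_{l}j_{l}g_{l}^{*}(b)$ equals $G_{j}^{*}(b)$ by the hypothesis $G_{j}^{*}=(1-\sum_{l}j_{l})g_{0}^{*}+\sum_{l}j_{l}g_{l}^{*}$, so the inner $b$-sum becomes $W_{G_{j}^{*}}(d)$.

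The main obstacle is the evaluation of $W_{G_{j}^{*}}(d)$, which is precisely where the hypothesis that $\mu_{G_{j}}$ is the constant $u$, independent of $j$, is essential. Since $G_{j}$ is bent with $W_{G_{j}}(y)=u\,p^{\frac{m}{2}}\zeta_{p}^{G_{j}^{*}(y)}$, I would write $\zeta_{p}^{G_{j}^{*}(y)}=u^{-1}p^{-\frac{m}{2}}W_{G_{j}}(y)$ and apply the inversion formula (2) with $k=1$, giving $\sum_{y}W_{G_{j}}(y)\zeta_{p}^{-\langle d,y\rangle}=p^{m}\zeta_{p}^{G_{j}(-d)}$, hence $W_{G_{j}^{*}}(d)=u^{-1}p^{\frac{m}{2}}\zeta_{p}^{G_{j}(-d)}$. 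Because $u$ does not depend on $j$ it factors out of the subsequent sum over $j$, and substituting $G_{j}(-d)=(1-\sum_{l}j_{l})g_{0}(-d)+\sum_{l}j_{l}g_{l}(-d)$ splits the $j$-sum into $t$ elementary character sums over $\mathbb{F}_{p}$.

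These collapse to the single condition $i=\psi(d)$, where $\psi(d)=(g_{0}(-d)-g_{1}(-d),\dots,g_{0}(-d)-g_{t}(-d))$, leaving
\[
W_{F^{*}}(c,d)=u^{-1}p^{\frac{m}{2}}\zeta_{p}^{g_{0}(-d)}\zeta_{p^{k}}^{g(\psi(d))}\,W_{f_{\psi(d)}^{*}}(c),
\]
so that $|W_{F^{*}}(c,d)|=p^{\frac{m}{2}}\,|W_{f_{\psi(d)}^{*}}(c)|$. Consequently $F^{*}$ is generalized bent, i.e. $|W_{F^{*}}(c,d)|=p^{\frac{r+m}{2}}$ for all $(c,d)$, if and only if $|W_{f_{\psi(d)}^{*}}(c)|=p^{\frac{r}{2}}$ for all $c\in V_{r}$ and all $d\in V_{m}$. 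Finally, since $-d$ ranges over all of $V_{m}$ as $d$ does, the indices $\psi(d)$ are exactly the $(g_{0}(y)-g_{1}(y),\dots,g_{0}(y)-g_{t}(y))$ with $y\in V_{m}$; thus the condition is precisely that $f_{(g_{0}(y)-g_{1}(y),\dots,g_{0}(y)-g_{t}(y))}^{*}$ is generalized bent for every $y\in V_{m}$, which is the claimed equivalence.
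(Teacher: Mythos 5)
Your proof is correct and takes essentially the same route as the paper: both read $F^{*}(x,y)=f^{*}_{\phi(y)}(x)+p^{k-1}g_{0}^{*}(y)+g(\phi(y))$ off (9), use the hypotheses on $G_{j}^{*}$ and the constant $\mu_{G_{j}}=u$ to get $W_{G_{j}^{*}}(d)=u^{-1}p^{\frac{m}{2}}\zeta_{p}^{G_{j}(-d)}$, and arrive at the identity $W_{F^{*}}(c,d)=u^{-1}p^{\frac{m}{2}}\zeta_{p}^{g_{0}(-d)}\zeta_{p^{k}}^{g(\psi(d))}W_{f^{*}_{\psi(d)}}(c)$, from which both directions of the equivalence follow exactly as you argue. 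The only difference is presentational: the paper notes that $F^{*}$ has the same form as $F$ with the $g_{s}^{*}$ satisfying the conditions of Lemma 2 (quoting the standard facts on duals of weakly regular bent functions) and then re-cites (9), whereas you rederive that formula by an explicit Walsh-transform computation.
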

\begin{proof}
By (9), we have that the dual of the generalized bent function $F$ constructed by Lemma 2 is $F^{*}(x, y)=f^{*}_{(g_{0}^{*}(y)-g_{1}^{*}(y), \dots, g_{0}^{*}(y)-g_{t}^{*}(y))}(x)+p^{k-1}g_{0}^{*}(y)+g(g_{0}^{*}(y)-g_{1}^{*}(y), \dots, g_{0}^{*}(y)-g_{t}^{*}(y))$. Note that for a weakly regular bent function $h$ with dual $h^{*}$, $h^{*}$ is also a weakly regular bent function and $h^{**}(x)=h(-x), \mu_{h^{*}}=\mu_{h}^{-1}$. Since for any $j=(j_{1}, \dots, j_{t})\in \mathbb{F}_{p}^{t}$, $G_{j}=(1-j_{1}-\dots-j_{t})g_{0}+j_{1}g_{1}+\dots+j_{t}g_{t}$ is a bent function and $G_{j}^{*}=(1-j_{1}-\dots-j_{t})g_{0}^{*}+j_{1}g_{1}^{*}+\dots+j_{t}g_{t}^{*}$ and $\mu_{G_{j}}(y)=u, y \in V_{m}$ where $u$ is a constant independent of $j$, we have $G_{j}^{*}$ is a bent function and $G_{j}^{**}(y)=G_{j}(-y)=(1-j_{1}-\dots-j_{t})g_{0}(-y)+j_{1}g_{1}(-y)+\dots+j_{t}g_{t}(-y)=(1-j_{1}-\dots-j_{t})g_{0}^{**}(y)+j_{1}g_{1}^{**}(y)+\dots+j_{t}g_{t}^{**}(y)$ and $\mu_{G_{j}^{*}}(y)=u^{-1}, y \in V_{m}$, that is, $g_{s}^{*} (0\leq s \leq t)$ also satisfy the condition of Lemma 2.
Since $F^{*}$ has the same form as $F$ and $g_{s}^{*} (0\leq s \leq t)$ satisfy the condition of Lemma 2, by (9), for any $(a, b)\in V_{r} \times V_{m}$ we have
\begin{equation*}
  \begin{split}
      & W_{F^{*}}(a, b) \\
      &=u^{-1}p^{\frac{m}{2}}\zeta_{p}^{g_{0}^{**}(b)}\zeta_{p^k}^{g(g_{0}^{**}(b)-g_{1}^{**}(b), \dots, g_{0}^{**}(b)-g_{t}^{**}(b))}W_{f_{(g_{0}^{**}(b)-g_{1}^{**}(b), \dots, g_{0}^{**}(b)-g_{t}^{**}(b))}^{*}}(a)\\
      &=u^{-1}p^{\frac{m}{2}}\zeta_{p}^{g_{0}(-b)}\zeta_{p^k}^{g(g_{0}(-b)-g_{1}(-b), \dots, g_{0}(-b)-g_{t}(-b))}W_{f_{(g_{0}(-b)-g_{1}(-b), \dots, g_{0}(-b)-g_{t}(-b))}^{*}}(a).
  \end{split}
\end{equation*}
If for any $y \in V_{m}$, $f_{(g_{0}(y)-g_{1}(y), \dots, g_{0}(y)-g_{t}(y))}^{*}$ is generalized bent, then obviously $F^{*}$ is generalized bent. Suppose $F^{*}$ is generalized bent. If there exists $y \in V_{m}$ such that $f_{(g_{0}(y)-g_{1}(y), \dots, g_{0}(y)-g_{t}(y))}^{*}$ is not generalized bent, let $a \in V_{r}$ with $|W_{f_{(g_{0}(y)-g_{1}(y), \dots, g_{0}(y)-g_{t}(y))}^{*}}(a)|\neq p^{\frac{r}{2}}$ and $b=-y$, then $|W_{F^{*}}(a, b)|\neq p^{\frac{r+m}{2}}$ and $F^{*}$ is not generalized bent, which is a contradiction. Hence, $F^{*}$ is a generalized bent function if and only if for any $y \in V_{m}$, the dual of $f_{(g_{0}(y)-g_{1}(y), \dots, g_{0}(y)-g_{t}(y))}$ is a generalized bent function.
\end{proof}

When $k=1, t=1, m=2$ and $g=0$, $g_{0}(y)=y_{1}y_{2}, g_{1}(y)=y_{1}y_{2}-y_{2}, y=(y_{1}, y_{2}) \in \mathbb{F}_{p}\times \mathbb{F}_{p}$, Theorem 2 reduces to Theorem 3 of \cite{Cesmelioglu5}. The corresponding function of Theorem 3 of \cite{Cesmelioglu5} is in the Generalized Maiorana-McFarland bent functions class (see \cite{Cesmelioglu3}). In \cite{Wang}, the authors showed that the canonical way to construct Generalized Maiorana-McFarland bent functions can be obtained by the generalized indirect sum construction method. In \cite{Wang}, the authors also showed that $p$-ary $PS_{ap}$ bent functions
\begin{equation*}
  g_{s}(y)=Tr_{1}^{m}(\alpha_{s}G(y_{1}y_{2}^{p^{m}-2})), y=(y_{1}, y_{2})\in \mathbb{F}_{p^m} \times \mathbb{F}_{p^m}, 0\leq s\leq t
\end{equation*}
satisfy the condition of Lemma 2 where $m\geq t+1$, $\alpha_{0}, \dots, \alpha_{t}\in \mathbb{F}_{p^m}$ are linearly independent over $\mathbb{F}_{p}$ and $G$ is a permutation over $\mathbb{F}_{p^m}$ with $G(0)=0$.

Since the above $g_{s}(0\leq s \leq t)$ satisfy the condition of Lemma 2 and $\{(g_{0}(y)-g_{1}(y), \dots, g_{0}(y)-g_{t}(y)), y=(y_{1}, y_{2}) \in \mathbb{F}_{p^m} \times \mathbb{F}_{p^m}\}=\mathbb{F}_{p}^{t}$, by Theorem 2 we have the following corollary.
\begin{corollary}
Let $f_{i} (i \in \mathbb{F}_{p}^{t}) : V_{r}\rightarrow \mathbb{Z}_{p^k}$ be generalized bent functions. Let $g_{s} (0\leq s\leq t) : \mathbb{F}_{p^m} \times \mathbb{F}_{p^m} \rightarrow \mathbb{F}_{p}$ be defined as $g_{s}(y)=Tr_{1}^{m}(\alpha_{s}G(y_{1}y_{2}^{p^{m}-2})), y=(y_{1}, y_{2})\in \mathbb{F}_{p^m} \times \mathbb{F}_{p^m}$ where $m\geq t+1$, $\alpha_{0}, \dots, \alpha_{t}\in \mathbb{F}_{p^m}$ are linearly independent over $\mathbb{F}_{p}$ and $G$ is a permutation over $\mathbb{F}_{p^m}$ with $G(0)=0$. Let $g$ be an arbitrary function from $\mathbb{F}_{p}^{t}$ to $\mathbb{Z}_{p^k}$. Then the dual of the generalized bent function $F: V_{r} \times \mathbb{F}_{p^m} \times \mathbb{F}_{p^m}\rightarrow \mathbb{Z}_{p^k}$ defined as $F(x,y)=f_{(g_{0}(y)-g_{1}(y), \dots, g_{0}(y)-g_{t}(y))}(x)+p^{k-1}g_{0}(y)+g(g_{0}(y)-g_{1}(y), \dots, g_{0}(y)-g_{t}(y)), (x, y)=(x, y_{1}, y_{2}) \in V_{r}\times \mathbb{F}_{p^m} \times \mathbb{F}_{p^m}$ is generalized bent if and only if for any $i \in \mathbb{F}_{p}^{t}$, the dual of $f_{i}$ is generalized bent.
\end{corollary}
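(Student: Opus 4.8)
The plan is to derive this corollary as a direct specialization of Theorem 2, so the only real work is to verify that the chosen building blocks $g_0, \dots, g_t$ meet the hypotheses of Lemma 2 and that the index tuple appearing in Theorem 2 sweeps out all of $\mathbb{F}_{p}^{t}$. The first point I would simply invoke: as recalled in the discussion preceding the corollary (and established in \cite{Wang}), the $PS_{ap}$ functions $g_{s}(y)=Tr_{1}^{m}(\alpha_{s}G(y_{1}y_{2}^{p^{m}-2}))$ are bent on $\mathbb{F}_{p^m}\times\mathbb{F}_{p^m}$ and satisfy the condition of Lemma 2, so that $F$ is a generalized bent function and Theorem 2 applies, giving that $F^{*}$ is generalized bent if and only if $f^{*}_{(g_{0}(y)-g_{1}(y),\dots,g_{0}(y)-g_{t}(y))}$ is generalized bent for every $y\in\mathbb{F}_{p^m}\times\mathbb{F}_{p^m}$.

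The core step is to show that the index $(g_{0}(y)-g_{1}(y),\dots,g_{0}(y)-g_{t}(y))$ attains every value of $\mathbb{F}_{p}^{t}$. I would start from the identity $g_{0}(y)-g_{s}(y)=Tr_{1}^{m}((\alpha_{0}-\alpha_{s})G(y_{1}y_{2}^{p^{m}-2}))$, immediate from the definition and the linearity of the trace. Since $y_{2}^{p^{m}-2}=y_{2}^{-1}$ for $y_{2}\neq 0$, the argument $y_{1}y_{2}^{p^{m}-2}$ already runs over all of $\mathbb{F}_{p^m}$ (fix any $y_{2}\neq 0$ and let $y_{1}$ vary), and because $G$ is a permutation of $\mathbb{F}_{p^m}$, the element $w:=G(y_{1}y_{2}^{p^{m}-2})$ likewise runs over all of $\mathbb{F}_{p^m}$. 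Hence the index tuple equals $(Tr_{1}^{m}((\alpha_{0}-\alpha_{1})w),\dots,Tr_{1}^{m}((\alpha_{0}-\alpha_{t})w))$ as $w$ ranges over $\mathbb{F}_{p^m}$.

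It then remains to argue that the $\mathbb{F}_{p}$-linear map $w\mapsto (Tr_{1}^{m}((\alpha_{0}-\alpha_{s})w))_{1\leq s\leq t}$ is onto $\mathbb{F}_{p}^{t}$. First I would check that $\alpha_{0}-\alpha_{1},\dots,\alpha_{0}-\alpha_{t}$ are linearly independent over $\mathbb{F}_{p}$: a relation $\sum_{s=1}^{t}c_{s}(\alpha_{0}-\alpha_{s})=0$ rewrites as $(\sum_{s}c_{s})\alpha_{0}-\sum_{s}c_{s}\alpha_{s}=0$, and the assumed linear independence of $\alpha_{0},\dots,\alpha_{t}$ forces all $c_{s}=0$. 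Surjectivity onto $\mathbb{F}_{p}^{t}$ (which needs only $m\geq t$, guaranteed by $m\geq t+1$) then follows from the nondegeneracy of the trace form $Tr_{1}^{m}$: the $t$ linear functionals $w\mapsto Tr_{1}^{m}((\alpha_{0}-\alpha_{s})w)$ are $\mathbb{F}_{p}$-independent precisely because the $\alpha_{0}-\alpha_{s}$ are, so their joint map is onto.

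Combining these, the condition of Theorem 2 that ``$f^{*}_{(g_{0}(y)-g_{1}(y),\dots,g_{0}(y)-g_{t}(y))}$ is generalized bent for all $y$'' becomes equivalent to ``$f_{i}^{*}$ is generalized bent for all $i\in\mathbb{F}_{p}^{t}$'', which is the asserted equivalence. The only substantive obstacle is the surjectivity step, and even that is routine linear algebra over $\mathbb{F}_{p}$; I do not anticipate any genuine difficulty, since the corollary is precisely the instance of Theorem 2 in which the index map is surjective.
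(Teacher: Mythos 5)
Your proposal is correct and follows exactly the paper's route: invoke the fact from \cite{Wang} that the $PS_{ap}$ functions $g_{s}$ satisfy the hypotheses of Lemma 2, then apply Theorem 2 together with the surjectivity of the index map $y\mapsto (g_{0}(y)-g_{1}(y),\dots,g_{0}(y)-g_{t}(y))$ onto $\mathbb{F}_{p}^{t}$. The only difference is that the paper merely asserts this surjectivity, whereas you supply the (correct) proof via the linear independence of $\alpha_{0}-\alpha_{1},\dots,\alpha_{0}-\alpha_{t}$ and the nondegeneracy of the trace form.
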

\section{A property of generalized bent functions whose dual is generalized bent}
In this section, let $f: V_{n}\rightarrow \mathbb{Z}_{p^k}$ be a generalized bent function whose dual $f^{*}$ is generalized bent. By using the knowledge on ideal decomposition in cyclotomic field, we will prove $f^{**}(x)=f(-x)$. For the case of bent functions $f: V_{n}\rightarrow \mathbb{F}_{p}$ whose dual $f^{*}$ is bent, \"{O}zbudak and Pelen \cite{Ozbudak} have shown that $f^{**}(x)=f(-x)$ holds by studying the value distributions of the dual of non-weakly regular bent functions. Compared with the proof method in \cite{Ozbudak}, our proof method seems more concise. Before proof, we need a lemma.
\begin{lemma}
Let $L=\mathbb{Q}(\zeta_{p^k}), \mathcal{O}_{L}=\mathbb{Z}[\zeta_{p^k}]$. Then for any $1\leq j \leq p^{k}-1$, $(1+\zeta_{p^k}^{j})\mathcal{O}_{L}=\mathcal{O}_{L}$ and $(1-\zeta_{p^k}^{j})\mathcal{O}_{L}=((1-\zeta_{p^k})\mathcal{O}_{L})^{p^s}$ where $s=v_{p}(j)$.
\end{lemma}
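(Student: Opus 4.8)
The plan is to reduce both statements to a single computation of the $\mathfrak{p}$-adic valuation $v_{\mathfrak{p}}(1-\zeta_{p^k}^j)$, where $\mathfrak{p}=(1-\zeta_{p^k})\mathcal{O}_L$ is the unique prime of $\mathcal{O}_L$ above $p$ recalled in the preliminaries, with $v_{\mathfrak{p}}(p)=\varphi(p^k)$. I claim this valuation equals $p^s$, and everything else follows from it. The essential elementary tool is the unit-ratio fact: if $\omega$ is a primitive $p^t$-th root of unity and $\gcd(a,p)=1$, then $(1-\omega^a)/(1-\omega)$ is a unit of $\mathcal{O}_L$. I would prove this by writing $(1-\omega^a)/(1-\omega)=1+\omega+\cdots+\omega^{a-1}\in\mathcal{O}_L$ and, choosing $b$ with $ab\equiv 1\pmod{p^t}$, noting that the reciprocal $(1-\omega)/(1-\omega^a)=1+\omega^a+\cdots+\omega^{a(b-1)}$ is also integral; hence $1-\omega$ and $1-\omega^a$ have the same $\mathfrak{p}$-valuation.

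Fix $j$ with $s=v_p(j)$, so $\zeta_{p^k}^j$ has order $p^{k-s}$ and is a primitive $p^{k-s}$-th root of unity. First, $1-\zeta_{p^k}^j$ is one of the factors of $\Phi_{p^{k-s}}(1)=\prod_{\omega}(1-\omega)=p$, the product being over all primitive $p^{k-s}$-th roots of unity $\omega$; hence $1-\zeta_{p^k}^j$ divides $p$ in $\mathcal{O}_L$, so $(1-\zeta_{p^k}^j)\mathcal{O}_L$ is a power of $\mathfrak{p}$ (the only prime dividing $p\mathcal{O}_L=\mathfrak{p}^{\varphi(p^k)}$). By the unit-ratio fact all $\varphi(p^{k-s})$ factors $1-\omega$ have equal $\mathfrak{p}$-valuation, so applying $v_{\mathfrak{p}}$ to $p=\prod_{\omega}(1-\omega)$ gives $\varphi(p^{k-s})\,v_{\mathfrak{p}}(1-\zeta_{p^k}^j)=v_{\mathfrak{p}}(p)=\varphi(p^k)$, whence $v_{\mathfrak{p}}(1-\zeta_{p^k}^j)=\varphi(p^k)/\varphi(p^{k-s})=p^s$. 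This yields the second assertion $(1-\zeta_{p^k}^j)\mathcal{O}_L=\mathfrak{p}^{p^s}=((1-\zeta_{p^k})\mathcal{O}_L)^{p^s}$.

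For the first assertion I use $1+\zeta_{p^k}^j=(1-\zeta_{p^k}^{2j})/(1-\zeta_{p^k}^j)$. Since $p$ is odd and $1\le j\le p^k-1$, we have $2j\not\equiv 0\pmod{p^k}$, and multiplying the exponent by $2$ does not change the order of the root of unity, so $\zeta_{p^k}^{2j}$ is again a primitive $p^{k-s}$-th root of unity; the computation above then gives $(1-\zeta_{p^k}^{2j})\mathcal{O}_L=\mathfrak{p}^{p^s}=(1-\zeta_{p^k}^j)\mathcal{O}_L$. Therefore, as fractional ideals, $(1+\zeta_{p^k}^j)\mathcal{O}_L=\mathfrak{p}^{p^s}\mathfrak{p}^{-p^s}=\mathcal{O}_L$, i.e. $1+\zeta_{p^k}^j$ is a unit.

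The routine parts are the divisor bookkeeping; the one point that needs care, and which I expect to be the main obstacle, is the non-circular verification that $(1-\zeta_{p^k}^j)\mathcal{O}_L$ involves no prime other than $\mathfrak{p}$ — handled cleanly above by exhibiting $1-\zeta_{p^k}^j$ as a divisor of $p$ through the factorization $\Phi_{p^{k-s}}(1)=p$ — together with the observation that the valuation depends only on the order $p^{k-s}$ of the root, which is what legitimizes working with the single prime $\mathfrak{p}$ throughout.
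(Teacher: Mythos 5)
Your proof is correct, and while it relies on the same elementary engine as the paper's proof --- the geometric-sum observation that $(1-\omega^a)/(1-\omega)$ is a unit of $\mathcal{O}_L$ when $\gcd(a,p)=1$, proved by writing both the ratio and its reciprocal as partial geometric sums --- the surrounding structure is genuinely different. The paper treats the two assertions independently: for $1+\zeta_{p^k}^{j}$ it exhibits the inverse directly as a geometric sum (choosing $l$ with $2l\equiv 1 \pmod{p^k}$), and for $1-\zeta_{p^k}^{j}$ it descends to the subfield $M=\mathbb{Q}(\zeta_{p^{k-s}})$, identifies $(1-\zeta_{p^k}^{j})\mathcal{O}_L$ with $(1-\zeta_{p^{k-s}})\mathcal{O}_L$, and then compares the ramification of $p$ in $\mathcal{O}_M$ and in $\mathcal{O}_L$, extracting the exponent $p^{s}=\varphi(p^k)/\varphi(p^{k-s})$ by applying uniqueness of ideal factorization to $\left((1-\zeta_{p^{k-s}})\mathcal{O}_L\right)^{\varphi(p^{k-s})}=\left((1-\zeta_{p^k})\mathcal{O}_L\right)^{\varphi(p^k)}$. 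You instead stay inside $L$ throughout: the evaluation $\Phi_{p^{k-s}}(1)=p$ writes $p$ as a product of the $\varphi(p^{k-s})$ elements $1-\omega$ over primitive $p^{k-s}$-th roots of unity $\omega$, all of equal $\mathfrak{p}$-valuation by the unit-ratio fact; this simultaneously shows that $(1-\zeta_{p^k}^{j})\mathcal{O}_L$ is a pure power of $\mathfrak{p}$ (your stated point of care, handled cleanly) and pins the exponent to $\varphi(p^k)/\varphi(p^{k-s})=p^{s}$ by counting valuations. The unit claim then falls out of the same computation via $1+\zeta_{p^k}^{j}=(1-\zeta_{p^k}^{2j})/(1-\zeta_{p^k}^{j})$ together with $v_p(2j)=v_p(j)$, which is where you (correctly) use that $p$ is odd. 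What your route buys: no intermediate field, no need to invoke the factorization of $p$ in $\mathcal{O}_M$, and one valuation computation serving both claims. What the paper's route buys: it uses nothing about cyclotomic polynomials beyond the ramification statement already quoted in its preliminaries, and its unit claim is a self-contained one-line identity independent of the second assertion.
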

\begin{proof}
For any $1 \leq j \leq p^k-1$, by
\begin{equation*}
  \frac{1}{1+\zeta_{p^k}^{j}}=\frac{1-\zeta_{p^k}^{j}}{1-\zeta_{p^k}^{2j}}=\frac{1-\zeta_{p^k}^{2lj}}{1-\zeta_{p^k}^{2j}}=\sum_{i=0}^{l-1}\zeta_{p^k}^{2ij} \in \mathcal{O}_{L}
\end{equation*}
where $l\in \mathbb{Z}_{p^k}$ satisfies $2l\equiv 1 (mod \ p^k)$, we have that $1+\zeta_{p^k}^{j}$ is a unit of $\mathcal{O}_{L}$, that is, $(1+\zeta_{p^k}^{j})\mathcal{O}_{L}=\mathcal{O}_{L}$.

For any $1\leq j \leq p^k-1$, let $s=v_{p}(j)$. Then $0\leq s<k$ and $gcd(j, p^{k})=p^{s}$. Let $M=\mathbb{Q}(\zeta_{p^{k-s}})$, then $\mathcal{O}_{M}=\mathbb{Z}[\zeta_{p^{k-s}}]$ and $M$ is a subfield of $L$ since $\zeta_{p^{k-s}}=\zeta_{p^k}^{p^s}$. By

\begin{equation*}
  \frac{1-\zeta_{p^{k-s}}^{\frac{j}{p^{s}}}}{1-\zeta_{p^{k-s}}}=\sum_{i=0}^{\frac{j}{p^s}-1}\zeta_{p^{k-s}}^{i} \in \mathcal{O}_{M}\subseteq \mathcal{O}_{L}
\end{equation*}
and
\begin{equation*}
  \frac{1-\zeta_{p^{k-s}}}{1-\zeta_{p^{k-s}}^{\frac{j}{p^{s}}}}=\frac{1-\zeta_{p^{k-s}}^{\frac{j}{p^s}t}}{1-\zeta_{p^{k-s}}^{\frac{j}{p^{s}}}}=\sum_{i=0}^{t-1}\zeta_{p^{k-s}}^{i\frac{j}{p^s}}\in \mathcal{O}_{M}\subseteq \mathcal{O}_{L}
\end{equation*}
where $t\in \mathbb{Z}_{p^{k-s}}$ satisfies $\frac{j}{p^{s}}t\equiv 1 (mod \ p^{k-s})$, we have that $\frac{1-\zeta_{p^{k-s}}^{\frac{j}{p^{s}}}}{1-\zeta_{p^{k-s}}}$ is a unit of $\mathcal{O}_{L}$. Note that $t$ exists since $gcd(\frac{j}{p^{s}}, p^{k-s})=1$. Hence $(1-\zeta_{p^k}^{j})\mathcal{O}_{L}=(1-\zeta_{p^{k-s}}^{\frac{j}{p^s}})\mathcal{O}_{L}=(1-\zeta_{p^{k-s}})\mathcal{O}_{L}$. By $p\mathcal{O}_{M}=((1-\zeta_{p^{k-s}})\mathcal{O}_{M})^{\varphi(p^{k-s})}$ and $(p\mathcal{O}_{M})\mathcal{O}_{L}=p\mathcal{O}_{L}=((1-\zeta_{p^k})\mathcal{O}_{L})^{\varphi(p^k)}$, we have $((1-\zeta_{p^{k-s}})\mathcal{O}_{L})^{\varphi(p^{k-s})}=((1-\zeta_{p^k})\mathcal{O}_{L})^{\varphi(p^k)}$. By the uniqueness of the decomposition of $(1-\zeta_{p^{k-s}})\mathcal{O}_{L}$, we have $(1-\zeta_{p^{k-s}})\mathcal{O}_{L}=((1-\zeta_{p^k})\mathcal{O}_{L})^{\frac{\varphi(p^k)}{\varphi(p^{k-s})}}=((1-\zeta_{p^k})\mathcal{O}_{L})^{p^{s}}$.
Hence, $(1-\zeta_{p^k}^{j})\mathcal{O}_{L}=((1-\zeta_{p^k})\mathcal{O}_{L})^{p^s}$.
\end{proof}

Now we prove $f^{**}(x)=f(-x)$ by using Lemma 3. In the subsequent of this paper, $L=\mathbb{Q}(\zeta_{p^k}), \mathcal{O}_{L}=\mathbb{Z}[\zeta_{p^k}]$ unless otherwise stated.
\begin{theorem}
Let $f: V_{n}\rightarrow \mathbb{Z}_{p^k}$ be a generalized bent function whose dual $f^{*}$ is also a generalized bent function. Then $f^{**}(x)=f(-x), x \in V_{n}$, where $f^{**}$ is the dual of $f^{*}$.
\end{theorem}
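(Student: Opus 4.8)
The plan is to produce a single algebraic identity expressing $W_{f^{*}}(x)$ as $\overline{\pi}\,\zeta_{p^k}^{f(-x)}$ plus a ``defect'' caused by the possibly non-weakly-regular behaviour of $f$, and then to use the ideal decomposition of Lemma 3 to show that this defect cannot perturb the phase. Throughout I would write $\pi=\xi p^{\frac{n}{2}}$. First I would check that, although $\xi$ itself need not lie in $L=\mathbb{Q}(\zeta_{p^k})$, the product $\pi$ does: for $n$ even it is $p^{\frac{n}{2}}\in\mathbb{Z}$, and for $n$ odd it equals $p^{\frac{n-1}{2}}\sqrt{p^{*}}$ with $p^{*}=(-1)^{\frac{p-1}{2}}p$, so $\pi\in\mathbb{Z}[\zeta_{p}]\subseteq\mathcal{O}_{L}$; moreover $\overline{\pi}=\theta\pi$ with $\theta=\xi^{2}\in\{\pm1\}$. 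In this notation $W_{f}(a)=\varepsilon_{f}(a)\pi\zeta_{p^k}^{f^{*}(a)}$, and since $f^{*}$ is generalized bent, $W_{f^{*}}(x)=\varepsilon_{f^{*}}(x)\pi\zeta_{p^k}^{f^{**}(x)}$.

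For the main computation I would substitute $\zeta_{p^k}^{f^{*}(a)}=\varepsilon_{f}(a)\pi^{-1}W_{f}(a)$ into the definition of $W_{f^{*}}(x)$ and split $\varepsilon_{f}(a)=1-(1-\varepsilon_{f}(a))$. The contribution of the constant $1$ is $\pi^{-1}\sum_{a}W_{f}(a)\zeta_{p}^{-\langle a,x\rangle}=\pi^{-1}p^{n}\zeta_{p^k}^{f(-x)}=\overline{\pi}\,\zeta_{p^k}^{f(-x)}$ by the inverse transform (2) evaluated at $-x$ (using $\pi\overline{\pi}=p^{n}$), while $1-\varepsilon_{f}(a)$ is nonzero only where $\varepsilon_{f}(a)=-1$, on which set $W_{f}(a)=-\pi\zeta_{p^k}^{f^{*}(a)}$. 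Collecting terms yields $W_{f^{*}}(x)=\overline{\pi}\,\zeta_{p^k}^{f(-x)}+2\sum_{a:\varepsilon_{f}(a)=-1}\zeta_{p^k}^{f^{*}(a)}\zeta_{p}^{-\langle a,x\rangle}$, so the defect $D:=W_{f^{*}}(x)-\overline{\pi}\,\zeta_{p^k}^{f(-x)}$ lies in $2\mathcal{O}_{L}$. This is the step I expect to demand the most careful bookkeeping with $\xi$, $\pi$ and $\overline{\pi}$.

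Next I would equate the two expressions for $W_{f^{*}}(x)$. Since $\overline{\pi}=\theta\pi$, this gives $D=\pi\gamma$ with $\gamma=\varepsilon_{f^{*}}(x)\zeta_{p^k}^{f^{**}(x)}-\theta\zeta_{p^k}^{f(-x)}\in\mathcal{O}_{L}$. Now $\pi$ is, up to a unit, a power of the prime $\mathfrak{P}=(1-\zeta_{p^k})\mathcal{O}_{L}$ over $p$, so the ideals $\pi\mathcal{O}_{L}$ and $2\mathcal{O}_{L}$ are coprime; from $\pi\gamma\in 2\mathcal{O}_{L}$ I can therefore conclude $\gamma\in 2\mathcal{O}_{L}$. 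Factoring out the unit $\zeta_{p^k}^{f(-x)}$ and multiplying by the unit $\theta$, this reads $\delta\zeta_{p^k}^{j}-1\in 2\mathcal{O}_{L}$, where $j\equiv f^{**}(x)-f(-x)\pmod{p^{k}}$ and $\delta=\theta\varepsilon_{f^{*}}(x)\in\{\pm1\}$. I regard the implication $\pi\gamma\in 2\mathcal{O}_{L}\Rightarrow\gamma\in 2\mathcal{O}_{L}$ together with the ensuing appeal to Lemma 3 as the conceptual heart of the argument.

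Finally I would invoke Lemma 3. If $\delta=1$ the membership is $1-\zeta_{p^k}^{j}\in 2\mathcal{O}_{L}$, and if $\delta=-1$ it is $1+\zeta_{p^k}^{j}\in 2\mathcal{O}_{L}$. Suppose $j\not\equiv 0\pmod{p^{k}}$. By Lemma 3, $(1-\zeta_{p^k}^{j})\mathcal{O}_{L}=\mathfrak{P}^{p^{v_{p}(j)}}$ is a power of the prime over $p$, while $1+\zeta_{p^k}^{j}$ is a unit; in either case the principal ideal generated is coprime to $2\mathcal{O}_{L}$, so the containment in $2\mathcal{O}_{L}$ would force $2\mathcal{O}_{L}=\mathcal{O}_{L}$, which is impossible since $p$ is odd. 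Hence $j\equiv 0\pmod{p^{k}}$, that is $f^{**}(x)=f(-x)$. (Equivalently, one may compare norms: a nonzero element of $2\mathcal{O}_{L}$ has norm divisible by $2^{[L:\mathbb{Q}]}$, whereas $1-\zeta_{p^k}^{j}$ has odd norm, a power of $p$, and $1+\zeta_{p^k}^{j}$ has norm $\pm1$.) Carrying this out for every $x\in V_{n}$ establishes the theorem.
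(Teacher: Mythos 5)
Your proposal is correct and is essentially the paper's own argument in lightly rearranged form: the paper likewise compares the inverse-transform identity (6) with the generalized-bent expression for $W_{f^{*}}$, isolates the factor-$2$ sum supported on the set where $\varepsilon_{f}=-1$, and invokes Lemma 3 together with the fact that $2\mathcal{O}_{L}$ cannot be (a divisor of) a power of the ramified prime $(1-\zeta_{p^k})\mathcal{O}_{L}$ to force $f^{**}(x)=f(-x)$. The only cosmetic difference is that your element $\pi=\xi p^{\frac{n}{2}}\in\mathcal{O}_{L}$ merges the paper's two parity cases (where it multiplies by $\sqrt{p}$ only when $n$ is odd), and you phrase the final contradiction via coprimality/norms rather than the paper's derivation of $2\mathbb{Z}=p\mathbb{Z}$.
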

\begin{proof}
Consider the left-hand side of Equation (6),
\begin{equation*}
   \begin{split}
    &\xi \sum_{x \in V_{n}}\varepsilon_{f}(x)\zeta_{p^k}^{f^{*}(x)}\zeta_{p}^{\langle a, x\rangle}\\
    & = \xi \sum_{x\in V_{n}:\ \varepsilon_{f}(x)=1}\zeta_{p^k}^{f^{*}(x)}\zeta_{p}^{\langle a, x\rangle}-\xi \sum_{x \in V_{n}: \ \varepsilon_{f}(x)=-1}\zeta_{p^k}^{f^{*}(x)}\zeta_{p}^{\langle a, x\rangle} \\
    & = 2\xi \sum_{x\in V_{n}: \ \varepsilon_{f}(x)=1}\zeta_{p^k}^{f^{*}(x)}\zeta_{p}^{\langle a, x\rangle}-\xi \sum_{x \in V_{n}}\zeta_{p^k}^{f^{*}(x)}\zeta_{p}^{\langle a, x\rangle}\\
    & =2\xi \sum_{x\in V_{n}: \ \varepsilon_{f}(x)=1}\zeta_{p^k}^{f^{*}(x)}\zeta_{p}^{\langle a, x\rangle}-\xi W_{f^{*}}(-a),
    \end{split}
\end{equation*}
where $\xi=1$ if $p \equiv 1 \ (mod \ 4)$ or $n$ is even and $\xi =\sqrt{-1}$ if $p \equiv 3 \ (mod \ 4)$ and $n$ is odd, $\varepsilon_{f}: V_{n}\rightarrow \{\pm 1\}$ is defined by (5). Since $f^{*}$ is a generalized bent function, we have $W_{f^{*}}(-a)=\xi \varepsilon_{f^{*}}(-a)p^{\frac{n}{2}}\zeta_{p^k}^{f^{**}(-a)}$ where $\varepsilon_{f^{*}}: V_{n}\rightarrow \{\pm 1\}$. Thus, for any $a \in V_{n}$ we have
\begin{equation}\label{10}
\xi \sum_{x \in V_{n}}\varepsilon_{f}(x)\zeta_{p^k}^{f^{*}(x)}\zeta_{p}^{\langle a, x\rangle}
=2\xi \sum_{x\in V_{n}: \ \varepsilon_{f}(x)=1}\zeta_{p^k}^{f^{*}(x)}\zeta_{p}^{\langle a, x\rangle}-\xi^{2}\varepsilon_{f^{*}}(-a)p^{\frac{n}{2}}\zeta_{p^{k}}^{f^{**}(-a)}.
\end{equation}
Let $X_{a}=\xi \sum_{x\in V_{n}: \ \varepsilon_{f}(x)=1}\zeta_{p^k}^{f^{*}(x)}\zeta_{p}^{\langle a, x\rangle}$. By (6) and (10), we have
\begin{equation}\label{11}
  2X_{a}=p^{\frac{n}{2}}\zeta_{p^k}^{f(a)}(1+\xi^{2}\varepsilon_{f^{*}}(-a)\zeta_{p^k}^{f^{**}(-a)-f(a)}), a \in V_{n}.
\end{equation}
Suppose there exists $a \in V_{n}$ such that $f^{**}(-a)\neq f(a)$, that is, $\Delta_{a}\triangleq f^{**}(-a)-f(a)\neq 0$.

1) When $n$ is even, then $\xi=1$. Note that in this case, $X_{a} \in \mathcal{O}_{L}$. By (11) and Lemma 3, we have
\begin{equation*}
  (2\mathcal{O}_{L})(X_{a}\mathcal{O}_{L})=\left\{\begin{split}
                                                      & ((1-\zeta_{p^k})\mathcal{O}_{L})^{\varphi(p^k)\frac{n}{2}} \ \  \ \  if \ \varepsilon_{f^{*}}(-a)=1,\\
                                                      & ((1-\zeta_{p^k})\mathcal{O}_{L})^{\varphi(p^k)\frac{n}{2}+p^{v_{p}(\Delta_{a})}} \ if \ \varepsilon_{f^{*}}(-a)=-1.
                                                  \end{split}\right.
\end{equation*}
Since $\frac{1}{2} \notin \mathcal{O}_{L}$, we have $2\mathcal{O}_{L}\neq \mathcal{O}_{L}$. Indeed, if $\frac{1}{2} \in \mathcal{O}_{L}$, then by $\{\zeta_{p^k}^{0}, \dots, $ $\zeta_{p^k}^{\varphi(p^k)-1}\}$ is an integer basis of $\mathcal{O}_{L}$, $\frac{1}{2}=\sum_{i=0}^{\varphi(p^k)-1}a_{i}\zeta_{p^k}^{i}$ where $a_{i}\in \mathbb{Z}$, that is, $\sum_{i=0}^{\varphi(p^k)-1}2a_{i}\zeta_{p^k}^{i}=\zeta_{p^k}^{0}$. This equation deduces $a_{0}=\frac{1}{2}, a_{i}=0 (1\leq i \leq \varphi(p^k)-1)$, which contradicts $a_{0}\in \mathbb{Z}$. Then by the uniqueness of the decomposition of $2\mathcal{O}_{L}$, we have
\begin{equation}\label{12}
  2\mathcal{O}_{L}=((1-\zeta_{p^k})\mathcal{O}_{L})^{t}
\end{equation}
for some positive integer $t$. From (12), we have $2\mathbb{Z}=\mathbb{Z}\cap (1-\zeta_{p^k})\mathcal{O}_{L}$. And from $p\mathcal{O}_{L}=((1-\zeta_{p^k})\mathcal{O}_{L})^{\varphi(p^k)}$, we have $p\mathbb{Z}=\mathbb{Z}\cap (1-\zeta_{p^k})\mathcal{O}_{L}$. Hence, $2\mathbb{Z}=p\mathbb{Z}$, which is a contradiction since $p$ is an odd prime. So in this case, $f^{**}(-a)=f(a)$ for any $a \in V_{n}$.

2) When $n$ is odd, by multiplying both sides of (11) by $\sqrt{p}$, we obtain
\begin{equation}\label{13}
  2X_{a}\sqrt{p}=p^{\frac{n+1}{2}}\zeta_{p^k}^{f(a)}(1+\xi^{2}\varepsilon_{f^{*}}(-a)\zeta_{p^k}^{f^{**}(-a)-f(a)}).
\end{equation}
Recall that when $n$ is odd, $\xi=1$ if $p\equiv 1 \ (mod \ 4)$ and $\xi=\sqrt{-1}$ if $p\equiv 3 \ (mod \ 4)$. Note that $\xi^{2}\in \{\pm1\}$ and $X_{a}\sqrt{p}\in \mathcal{O}_{L}$ since $\xi \sqrt{p}=\sum_{i \in \mathbb{F}_{p}^{*}}\eta(i)\zeta_{p}^{i} \in \mathbb{Z}[\zeta_{p}]$ by a well known result on Gauss sums (see \cite{Lidl}). By (13) and Lemma 3, we have
\begin{equation*}
  (2\mathcal{O}_{L})((X_{a}\sqrt{p})\mathcal{O}_{L})=\left\{\begin{split}
                                                      & ((1-\zeta_{p^k})\mathcal{O}_{L})^{\varphi(p^k)\frac{n+1}{2}} \ \  \ \  if \ \xi^{2}\varepsilon_{f^{*}}(-a)=1,\\
                                                      & ((1-\zeta_{p^k})\mathcal{O}_{L})^{\varphi(p^k)\frac{n+1}{2}+p^{v_{p}(\Delta_{a})}} \ if \ \xi^{2}\varepsilon_{f^{*}}(-a)=-1.
                                                  \end{split}\right.
\end{equation*}
Since $2\mathcal{O}_{L}\neq \mathcal{O}_{L}$, by the uniqueness of the decomposition of $2\mathcal{O}_{L}$, we have $2\mathcal{O}_{L}=((1-\zeta_{p^k})\mathcal{O}_{L})^{t}$ for some positive integer $t$. Then with the same argument as 1), we have $2\mathbb{Z}=p\mathbb{Z}$, which is a contradiction. So in this case, $f^{**}(-a)=f(a)$ for any $a \in V_{n}$.
\end{proof}
\section{Nonexistence of self-dual generalized bent function if $p\equiv 3 \ (mod \ 4)$ and $n$ is odd}
In this section, we will show that there is no self-dual generalized bent function $f: V_{n}\rightarrow \mathbb{Z}_{p^k}$ if $p\equiv 3 \ (mod \ 4)$ and $n$ is odd. Note that in \cite{Cesmelioglu4}, for $p \equiv 3 \ (mod \ 4)$ and $n$ is odd, the authors showed that there is no weakly regular self-dual bent function and they pointed out that there exist non-weakly regular self-dual bent functions. Indeed, they pointed out that the non-weakly regular bent function $g_{3}(x)=Tr_{1}^{3}(x^{22}+x^{8})$ from $\mathbb{F}_{3^3}$ to $\mathbb{F}_{3}$ is self-dual. But in fact, it is easy to verify that $g_{3}$ is not self-dual by using MAGMA and there is no self-dual bent function from $V_{n}$ to $\mathbb{F}_{p}$ if $p\equiv 3 \ (mod \ 4)$ and $n$ is odd according to the theorem of this section.

Let $f: V_{n}\rightarrow \mathbb{Z}_{p^k}$ be a generalized bent function satisfying $f(x)=f(-x), x \in V_{n}$. For any $a \in V_{n}$, we have
\begin{equation*}
  \begin{split}
    W_{f}(a) & =\sum_{x \in V_{n}}\zeta_{p^k}^{f(x)}\zeta_{p}^{-\langle a, x\rangle} \\
             & =\sum_{x \in V_{n}}\zeta_{p^k}^{f(-x)}\zeta_{p}^{\langle a, x\rangle} \\
             & =\sum_{x \in V_{n}}\zeta_{p^k}^{f(x)}\zeta_{p}^{\langle a, x\rangle}\\
             &=W_{f}(-a),
  \end{split}
\end{equation*}
where in the second equation we use the change of variable $x\mapsto -x$ and in the third equation we use $f(x)=f(-x)$. By $W_{f}(a)=W_{f}(-a)$ and $W_{f}(a)=\xi \varepsilon_{f}(a)p^{\frac{n}{2}}\zeta_{p^k}^{f^{*}(a)}$ where $\xi=1$ if $p\equiv 1\ (mod \ 4)$ or $n$ is even and $\xi=\sqrt{-1}$ if $p\equiv 3 \ (mod \ 4)$ and $n$ is odd, $\varepsilon_{f}: V_{n}\rightarrow \{\pm 1\}$, we have
\begin{equation}\label{14}
\varepsilon_{f}(a)=\varepsilon_{f}(-a), f^{*}(a)=f^{*}(-a).
\end{equation}
Note that for any generalized bent function $f: V_{n}\rightarrow \mathbb{Z}_{p^k}$ satisfying $f(x)=f(-x)$, Equation (14) holds.

For any $a\in V_{n}$, let
\begin{equation*}
  S_{a}=\sum_{x\in V_{n}: \ \varepsilon_{f}(x)=1}\zeta_{p^k}^{f^{*}(x)}\zeta_{p}^{\langle a, x\rangle}, \ T_{a}=\sum_{x\in V_{n}: \ \varepsilon_{f}(x)=-1}\zeta_{p^k}^{f^{*}(x)}\zeta_{p}^{\langle a, x\rangle}.
\end{equation*}
Suppose the dual $f^{*}$ is generalized bent. Then by the definitions of $S_{a}$ and $T_{a}$, we have $S_{a}+T_{a}=W_{f^{*}}(-a)=\xi \varepsilon_{f^{*}}(-a)p^{\frac{n}{2}}\zeta_{p^k}^{f^{**}(-a)}$ where $\varepsilon_{f^{*}}: V_{n}\rightarrow \{\pm 1\}$. Since $f^{*}(x)=f^{*}(-x)$, we have $\varepsilon_{f^{*}}(-a)=\varepsilon_{f^{*}}(a)$. By Theorem 3, we have $f^{**}(-a)=f(a)$. Hence, we obtain

\begin{equation}\label{15}
  S_{a}+T_{a}=\xi \varepsilon_{f^{*}}(a)p^{\frac{n}{2}}\zeta_{p^k}^{f(a)}.
\end{equation}
By (6), we have
\begin{equation}\label{16}
  S_{a}-T_{a}=\xi^{-1}p^{\frac{n}{2}}\zeta_{p^k}^{f(a)}.
\end{equation}
By (15) and (16), we have
\begin{equation}\label{17}
  S_{a}=p^{\frac{n}{2}}\zeta_{p^k}^{f(a)}\frac{\xi \varepsilon_{f^{*}}(a)+\xi^{-1}}{2}, \ T_{a}=p^{\frac{n}{2}}\zeta_{p^k}^{f(a)}\frac{\xi \varepsilon_{f^{*}}(a)-\xi^{-1}}{2}.
\end{equation}

Based on the above analysis and (17), we obtain the following proposition. For weakly regular generalized bent functions, the following proposition is obvious. But for non-weakly regular generalized bent functions, the following proposition is not obvious.

\begin{proposition}
Let $f: V_{n}\rightarrow \mathbb{Z}_{p^k}$ be a generalized bent function which satisfies that $f(x)=f(-x), x \in V_{n}$ and the dual $f^{*}$ is a generalized bent function. Let $\varepsilon_{f}, \varepsilon_{f^{*}}: V_{n}\rightarrow \{\pm 1\}$ be defined by (5). Then

1) If $p \equiv 1 \ (mod \ 4)$ or $n$ is even, then $\varepsilon_{f^{*}}(0)=\varepsilon_{f}(0)$;

2) If $p \equiv 3 \ (mod \ 4)$ and $n$ is odd, then $\varepsilon_{f^{*}}(0)=-\varepsilon_{f}(0)$.
\end{proposition}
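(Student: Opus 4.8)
The plan is to specialise the identities (17) to $a=0$ and then decide which of the two sums
$$S_{0}=\sum_{x\in V_{n}:\ \varepsilon_{f}(x)=1}\zeta_{p^k}^{f^{*}(x)},\qquad T_{0}=\sum_{x\in V_{n}:\ \varepsilon_{f}(x)=-1}\zeta_{p^k}^{f^{*}(x)}$$
is forced to vanish. Reading (17) at $a=0$ shows that $S_{0}T_{0}=0$, that exactly one of $S_{0},T_{0}$ is zero, and that the nonzero one equals $\pm\,\xi p^{\frac{n}{2}}\zeta_{p^k}^{f(0)}$. Moreover $T_{0}=0$ is equivalent to $\xi\varepsilon_{f^{*}}(0)=\xi^{-1}$, i.e. to $\varepsilon_{f^{*}}(0)=\xi^{-2}$, where $\xi^{-2}=1$ if $p\equiv1\ (mod\ 4)$ or $n$ is even and $\xi^{-2}=-1$ if $p\equiv3\ (mod\ 4)$ and $n$ is odd. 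Hence both parts of the proposition will follow at once from the single clean statement that I aim to prove, namely $\varepsilon_{f}(0)=1\Longleftrightarrow T_{0}=0$; equivalently, the partial sum taken over the set \emph{not} containing $0$ is the vanishing one.

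The key idea is to reduce $S_{0}$ and $T_{0}$ modulo $2\mathcal{O}_{L}$ by means of the involution $x\mapsto -x$. By (14) the sets $\{x:\varepsilon_{f}(x)=1\}$ and $\{x:\varepsilon_{f}(x)=-1\}$ are each stable under $x\mapsto -x$ and $f^{*}(x)=f^{*}(-x)$; since $p$ is odd, $x\mapsto -x$ fixes only $0$ and splits each set into the point $0$ (when it lies there) together with two-element orbits $\{x,-x\}$. On each orbit the contribution is $\zeta_{p^k}^{f^{*}(x)}+\zeta_{p^k}^{f^{*}(-x)}=2\zeta_{p^k}^{f^{*}(x)}\in 2\mathcal{O}_{L}$, so that
$$S_{0}\equiv \delta_{+}\,\zeta_{p^k}^{f^{*}(0)},\qquad T_{0}\equiv \delta_{-}\,\zeta_{p^k}^{f^{*}(0)}\pmod{2\mathcal{O}_{L}},$$
where $\delta_{+}=1$ exactly when $\varepsilon_{f}(0)=1$ and $\delta_{-}=1-\delta_{+}$. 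As $\zeta_{p^k}^{f^{*}(0)}$ is a unit of $\mathcal{O}_{L}$, this says $S_{0}\in 2\mathcal{O}_{L}$ iff $\varepsilon_{f}(0)=-1$, and $T_{0}\in 2\mathcal{O}_{L}$ iff $\varepsilon_{f}(0)=1$.

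It then remains to show that the nonzero one of $S_{0},T_{0}$ never lies in $2\mathcal{O}_{L}$, which pins down the vanishing one. Here I would first note $\xi p^{\frac{n}{2}}\in\mathcal{O}_{L}$: for even $n$ it is the integer $p^{\frac{n}{2}}$, while for odd $n$ one writes $\xi p^{\frac{n}{2}}=(\xi\sqrt{p})\,p^{\frac{n-1}{2}}$ and uses the Gauss-sum identity $\xi\sqrt{p}=\sum_{i\in\mathbb{F}_{p}^{*}}\eta(i)\zeta_{p}^{i}\in\mathbb{Z}[\zeta_{p}]$ already invoked in the proof of Theorem 3. Since $(\xi p^{\frac{n}{2}})^{2}=\xi^{2}p^{n}=\pm p^{n}$ is an odd integer, $\xi p^{\frac{n}{2}}\in 2\mathcal{O}_{L}$ would give $\pm p^{n}\in 4\mathcal{O}_{L}\cap\mathbb{Z}=4\mathbb{Z}$ (the equality $4\mathcal{O}_{L}\cap\mathbb{Z}=4\mathbb{Z}$ following, exactly as in the proof of Theorem 3, from $1$ being a member of an integral basis of $\mathcal{O}_{L}$), contradicting that $p$ is odd. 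Thus $\pm\,\xi p^{\frac{n}{2}}\zeta_{p^k}^{f(0)}\notin 2\mathcal{O}_{L}$. Combining this with the previous paragraph: if $\varepsilon_{f}(0)=1$ then $T_{0}\in 2\mathcal{O}_{L}$, so $T_{0}$ cannot be the nonzero value and hence $T_{0}=0$; symmetrically $\varepsilon_{f}(0)=-1$ forces $S_{0}=0$. This proves $\varepsilon_{f}(0)=1\Leftrightarrow T_{0}=0$, and feeding it into the first paragraph yields $\varepsilon_{f^{*}}(0)=\xi^{-2}\varepsilon_{f}(0)$, i.e. both cases of the proposition.

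The hard part is conceptual rather than computational: the sign relating $\varepsilon_{f^{*}}(0)$ to $\varepsilon_{f}(0)$ is invisible to the $(1-\zeta_{p^k})\mathcal{O}_{L}$-adic analysis of Theorem 3, because valuations above $p$ detect only the phase $\zeta_{p^k}^{\bullet}$ and can never separate a sign $\pm$. The decisive move is therefore to change primes and argue modulo $2\mathcal{O}_{L}$, where the hypothesis $f(x)=f(-x)$ (through (14)) lets $x\mapsto -x$ annihilate every term except the fixed point $x=0$, so that $\varepsilon_{f}(0)$ is recovered as a parity; the complementary ingredient is that $\xi p^{\frac{n}{2}}$ is a $2$-adic unit, which is precisely what keeps the surviving Walsh value detectable modulo $2$.
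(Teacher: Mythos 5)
Your proof is correct, and it converts the parity obstruction into the conclusion by a genuinely different route than the paper. Both arguments start from (17) at $a=0$ and both exploit the same key fact: by (14), the involution $x\mapsto-x$ preserves each of the sets $\{x\in V_{n}:\varepsilon_{f}(x)=1\}$, $\{x\in V_{n}:\varepsilon_{f}(x)=-1\}$ and the values of $f^{*}$, and it fixes only $x=0$. The paper, however, argues by contradiction: if the sign were wrong, (17) would force the sum over the set \emph{containing} $0$ to vanish; writing $f^{*}=g_{0}p^{k-1}+\bar{g_{1}}$, expanding this vanishing sum over the level sets $A_{i_{0},\bar{i_{1}}}$ and using the integral basis $\{\zeta_{p}^{i}\zeta_{p^k}^{j}\}$, all the counts $N_{i_{0},\bar{g_{1}}(0)}$, $i_{0}\in\mathbb{F}_{p}$, must coincide, contradicting that exactly one of them is odd. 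You instead argue directly modulo the ideal $2\mathcal{O}_{L}$: pairing $x$ with $-x$ shows that the sum whose index set \emph{misses} $0$ lies in $2\mathcal{O}_{L}$, while the common nonzero value $\pm\xi^{-1}p^{\frac{n}{2}}\zeta_{p^k}^{f(0)}$ of $S_{0},T_{0}$ does not (your squaring argument via $4\mathcal{O}_{L}\cap\mathbb{Z}=4\mathbb{Z}$ would otherwise give $4\mid p^{n}$); hence the sum missing $0$ is the vanishing one, which pins down $\varepsilon_{f^{*}}(0)=\xi^{-2}\varepsilon_{f}(0)$ and yields both cases at once. What each buys: your route avoids the decomposition of $f^{*}$ and all level-set counting, and is stylistically of a piece with the paper's ideal-theoretic proof of Theorem 3; its price is the extra lemma that $\xi p^{\frac{n}{2}}\notin 2\mathcal{O}_{L}$, which the paper never needs because it only ever exploits the vanishing case. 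One small point to make explicit in a write-up: the ``iff'' in your claim that $T_{0}\in 2\mathcal{O}_{L}$ iff $\varepsilon_{f}(0)=1$ requires $2\mathcal{O}_{L}\neq\mathcal{O}_{L}$ (established inside the paper's proof of Theorem 3); fortunately your final argument uses only the trivial direction of that equivalence, so this is a matter of presentation rather than a gap.
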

\begin{proof}
Let $g_{0}: V_{n}\rightarrow \mathbb{F}_{p}$ and $\bar{g_{1}}: V_{n}\rightarrow \mathbb{Z}_{p^{k-1}}$ satisfy $f^{*}=g_{0}p^{k-1}+\bar{g_{1}}$. Note that when $k=1$, $\mathbb{Z}_{p^{k-1}}=\{0\}$ and $g_{0}=f^{*}$, $\bar{g_{1}}=0$.

1) If $p \equiv 1 \ (mod \ 4)$ or $n$ is even, then $\xi=1$. By (17), we have $T_{a}=0$ if $\varepsilon_{f^{*}}(a)=1$ and $S_{a}=0$ if $\varepsilon_{f^{*}}(a)=-1$. Suppose $\varepsilon_{f^{*}}(0)=-\varepsilon_{f}(0)$. Without loss of generality, assume $\varepsilon_{f}(0)=1$. Then $\varepsilon_{f^{*}}(0)=-1$ and $S_{0}=0$. By the definition of $S_{0}$, we have $\sum_{x\in V_{n}: \ \varepsilon_{f}(x)=1}\zeta_{p^k}^{f^{*}(x)}=0$. For any $i_{0} \in \mathbb{F}_{p}, \bar{i_{1}} \in \mathbb{Z}_{p^{k-1}}$, let
\begin{equation}\label{18}
  A_{i_{0}, \bar{i_{1}}}=\{x \in V_{n}: \ \varepsilon_{f}(x)=1 \ and \ g_{0}(x)=i_{0}, \bar{g_{1}}(x)= \bar{i_{1}}\}
\end{equation}
and $N_{i_{0}, \bar{i_{1}}}$ denote the size of $A_{i_{0}, \bar{i_{1}}}$. Then we have
\begin{equation*}
  \sum_{i_{0}\in \mathbb{F}_{p}}\sum_{\bar{i_{1}} \in \mathbb{Z}_{p^{k-1}}}N_{i_{0}, \bar{i_{1}}}\zeta_{p}^{i_{0}}\zeta_{p^k}^{\bar{i_{1}}}=0.
\end{equation*}
By $\sum_{i=0}^{p-1}\zeta_{p}^{i}=0$ and the above equation we obtain
\begin{equation*}
  \sum_{i_{0}=0}^{p-2}\sum_{\bar{i_{1}} \in \mathbb{Z}_{p^{k-1}}}(N_{i_{0}, \bar{i_{1}}}-N_{p-1, \bar{i_{1}}})\zeta_{p}^{i_{0}}\zeta_{p^k}^{\bar{i_{1}}}=0,
\end{equation*}
which deduces $N_{i_{0}, \bar{i_{1}}}=N_{p-1, \bar{i_{1}}}$ for any $i_{0}\in \mathbb{F}_{p}, \bar{i_{1}} \in \mathbb{Z}_{p^{k-1}}$ since $\{\zeta_{p}^{i}\zeta_{p^k}^{j}: 0\leq i \leq p-2, 0\leq j \leq p^{k-1}-1\}$ is an integral basis of $\mathbb{Z}[\zeta_{p^k}]$. In particular, $N_{g_{0}(0), \bar{g_{1}}(0)}=N_{i_{0}, \bar{g_{1}}(0)}$ for any $i_{0} \in \mathbb{F}_{p}$.

By (14), we have $f^{*}(x)=f^{*}(-x)$ and $\varepsilon_{f}(x)=\varepsilon_{f}(-x)$. From $f^{*}(x)=f^{*}(-x)$, we have $g_{0}(x)=g_{0}(-x)$ and $\bar{g_{1}}(x)=\bar{g_{1}}(-x)$. And by $\varepsilon_{f}(x)=\varepsilon_{f}(-x)$, we have $x\in A_{i_{0}, \bar{i_{1}}}$ if and only if $-x\in A_{i_{0}, \bar{i_{1}}}$. Note that $0 \in A_{g_{0}(0), \bar{g_{1}}(0)}$ since $\varepsilon_{f}(0)=1$. Hence, $N_{g_{0}(0),\bar{g_{1}}(0)}$ is odd and $N_{i_{0}, \bar{i_{1}}}$ is even if $(i_{0}, \bar{i_{1}})\neq (g_{0}(0),\bar{g_{1}}(0))$, which contradicts $N_{g_{0}(0), \bar{g_{1}}(0)}=N_{i_{0}, \bar{g_{1}}(0)}, i_{0} \in \mathbb{F}_{p}$. Hence $\varepsilon_{f^{*}}(0)=\varepsilon_{f}(0)$.

2) If $p \equiv 3 \ (mod \ 4)$ and $n$ is odd, then $\xi=\sqrt{-1}$. By (17), we have $S_{a}=0$ if $\varepsilon_{f^{*}}(a)=1$ and $T_{a}=0$ if $\varepsilon_{f^{*}}(a)=-1$. Suppose $\varepsilon_{f^{*}}(0)=\varepsilon_{f}(0)$. Without loss of generality, assume $\varepsilon_{f}(0)=1$. Then $\varepsilon_{f^{*}}(0)=1$ and $S_{0}=0$. By the definition of $S_{0}$, we have $\sum_{x\in V_{n}: \ \varepsilon_{f}(x)=1}\zeta_{p^k}^{f^{*}(x)}=0$. For any $i_{0} \in \mathbb{F}_{p}, \bar{i_{1}} \in \mathbb{Z}_{p^{k-1}}$, let $A_{i_{0}, \bar{i_{1}}}$ be defined by (18) and $N_{i_{0}, \bar{i_{1}}}$ denote the size of $A_{i_{0}, \bar{i_{1}}}$. Then with the same argument as 1), we have that $N_{g_{0}(0), \bar{g_{1}}(0)}=N_{i_{0}, \bar{g_{1}}(0)}$ for any $i_{0}\in \mathbb{F}_{p}$ and $N_{g_{0}(0), \bar{g_{1}}(0)}$ is odd, $N_{i_{0}, \bar{i_{1}}}$ is even if $(i_{0}, \bar{i_{1}})\neq (g_{0}(0), \bar{g_{1}}(0))$, which is a contradiction. Hence, $\varepsilon_{f^{*}}(0)=-\varepsilon_{f}(0)$.
\end{proof}

By using Proposition 1, we obtain the following theorem.
\begin{theorem}
There is no self-dual generalized bent function $f: V_{n}\rightarrow \mathbb{Z}_{p^k}$ if $p\equiv 3 \ (mod \ 4)$ and $n$ is odd.
\end{theorem}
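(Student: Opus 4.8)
The plan is to argue by contradiction and reduce everything to Proposition 1, which does the real work. Suppose such a self-dual generalized bent function $f: V_{n}\rightarrow \mathbb{Z}_{p^k}$ exists, so that $f^{*}=f$, and assume $p\equiv 3 \ (mod \ 4)$ with $n$ odd. The first thing I would verify is that a self-dual $f$ automatically satisfies $f(x)=f(-x)$, which is exactly the symmetry hypothesis needed to invoke Proposition 1. To see this, note that since $f^{*}=f$ is itself generalized bent, Theorem 3 applies and gives $f^{**}(x)=f(-x)$ for all $x\in V_{n}$. On the other hand, $f^{**}$ is by definition the dual of $f^{*}$, and because $f^{*}=f$ we have $f^{**}=(f^{*})^{*}=f^{*}=f$. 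Combining the two identities yields $f(x)=f^{**}(x)=f(-x)$, so the symmetry condition holds.

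With $f(x)=f(-x)$ established and $f^{*}=f$ generalized bent, all hypotheses of Proposition 1 are satisfied, so I would apply its second part (the case $p\equiv 3 \ (mod \ 4)$ and $n$ odd) to obtain
\begin{equation*}
  \varepsilon_{f^{*}}(0)=-\varepsilon_{f}(0).
\end{equation*}
The final step is to observe that self-duality forces $\varepsilon_{f^{*}}=\varepsilon_{f}$: since $f^{*}=f$ we have $W_{f^{*}}=W_{f}$ and $f^{**}=f$, so directly from the definition (5) of $\varepsilon$ one gets $\varepsilon_{f^{*}}(a)=\xi^{-1}p^{-\frac{n}{2}}\zeta_{p^k}^{-f^{**}(a)}W_{f^{*}}(a)=\xi^{-1}p^{-\frac{n}{2}}\zeta_{p^k}^{-f(a)}W_{f}(a)=\varepsilon_{f}(a)$ for every $a$, and in particular $\varepsilon_{f^{*}}(0)=\varepsilon_{f}(0)$. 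Substituting this into the displayed equation gives $\varepsilon_{f}(0)=-\varepsilon_{f}(0)$, hence $2\varepsilon_{f}(0)=0$, which is impossible since $\varepsilon_{f}(0)\in\{\pm 1\}$. This contradiction completes the proof.

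There is no genuinely hard computational step here; the theorem is essentially a corollary of Proposition 1 together with Theorem 3. The only point requiring a little care is the reduction to the symmetry hypothesis $f(x)=f(-x)$, which must be derived from self-duality via Theorem 3 rather than assumed, and the bookkeeping identity $\varepsilon_{f^{*}}=\varepsilon_{f}$, which I would state explicitly from the definition (5) so that the sign clash with Proposition 1(2) is unambiguous. The sign $-1$ in Proposition 1(2), which ultimately traces back to $\xi=\sqrt{-1}$ in the case $p\equiv 3 \ (mod \ 4)$, $n$ odd, is precisely what makes the obstruction appear, so the main conceptual content lives in Proposition 1 and not in this deduction.
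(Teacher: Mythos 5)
Your proposal is correct and follows essentially the same route as the paper's own proof: use Theorem 3 together with $f^{*}=f$ to deduce $f(x)=f(-x)$, then apply Proposition 1(2) and the identification $\varepsilon_{f^{*}}=\varepsilon_{f}$ to get the contradiction $\varepsilon_{f}(0)=-\varepsilon_{f}(0)$. The only difference is that you spell out the bookkeeping step $\varepsilon_{f^{*}}=\varepsilon_{f}$ from definition (5), which the paper uses implicitly.
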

\begin{proof}
Let $f: V_{n}\rightarrow \mathbb{Z}_{p^k}$ be a self-dual generalized bent function. Since $f$ is a generalized bent function whose dual is generalized bent, we have $f^{**}(x)=f(-x)$ by Theorem 3. And since $f$ is self-dual, that is, $f^{*}=f$, we have $f(x)=f(-x)$. By $f^{*}=f$ and Proposition 1, we have $\varepsilon_{f}(0)=-\varepsilon_{f}(0)$ if $p \equiv 3 \ (mod \ 4)$ and $n$ is odd. Hence $\varepsilon_{f}(0)=0$, which contradicts $\varepsilon_{f}(0) \in \{\pm 1\}$. Therefore, there is no self-dual generalized bent function $f: V_{n}\rightarrow \mathbb{Z}_{p^k}$ if $p\equiv 3 \ (mod \ 4)$ and $n$ is odd.
\end{proof}
\section{A secondary construction of self-dual generalized bent functions if $p \equiv 1 \ (mod \ 4)$ or $n$ is even}
In this section, for $p\equiv 1 \ (mod \ 4)$ or $n$ is even, we give a secondary construction of self-dual generalized bent functions $f: V_{n}\rightarrow \mathbb{Z}_{p^k}$. First, we give a lemma.
\begin{lemma}
Let $m$ be a positive integer and $m$ be even if $p\equiv 3 \ (mod \ 4)$. Let $a \in \mathbb{F}_{p^m}^{*}$ be an arbitrary element. Let $\alpha, \beta \in \{\pm z^{\frac{p^m-1}{4}}\}$ where $z$ is a primitive element of $\mathbb{F}_{p^m}$. Let $f_{i} (i \in \mathbb{F}_{p}): V_{r}\rightarrow \mathbb{Z}_{p^k}$ be generalized bent functions. Let $g_{0}: \mathbb{F}_{p^m}\times \mathbb{F}_{p^m}\rightarrow \mathbb{F}_{p}$ be defined as $g_{0}(y_{1}, y_{2})=Tr_{1}^{m}(\frac{\beta}{2}(y_{1}^{2}+y_{2}^{2})), (y_{1}, y_{2})\in \mathbb{F}_{p^m}\times \mathbb{F}_{p^m}$. Let $h: \mathbb{F}_{p^m}\rightarrow \mathbb{F}_{p}$ and $g: \mathbb{F}_{p}\rightarrow \mathbb{Z}_{p^k}$ be arbitrary functions. Then the function $F(x, y_{1}, y_{2})=f_{h(a\alpha y_{1}+a y_{2})}(x)+p^{k-1}g_{0}(y_{1}, y_{2})+g(h(a\alpha y_{1}+a y_{2}))$ is a generalized bent function from $V_{r} \times \mathbb{F}_{p^m}\times \mathbb{F}_{p^m}$ to $\mathbb{Z}_{p^k}$ and its dual $F^{*}(x, y_{1}, y_{2})=f_{h(-\beta(a\alpha y_{1}+a y_{2}))}^{*}(x)+p^{k-1}g_{0}(y_{1}, y_{2})+g(h(-\beta(a\alpha y_{1}+a y_{2})))$.
\end{lemma}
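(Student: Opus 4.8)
The plan is to compute the Walsh transform $W_{F}$ directly and read off both the bentness of $F$ and its dual; I would not route this through the indirect sum of Lemma 2, because the selector $h(a\alpha y_{1}+ay_{2})$ is an arbitrary function of a linear form and so is not a difference of two bent functions, so the hypotheses of Lemma 2 are unavailable. First I would separate variables. Fix a frequency $(b, c_{1}, c_{2}) \in V_{r}\times \mathbb{F}_{p^m}\times \mathbb{F}_{p^m}$ and write $\ell(y_{1}, y_{2})=a\alpha y_{1}+ay_{2}$. Summing over $x\in V_{r}$ first turns the $x$-sum into $W_{f_{h(\ell(y_{1},y_{2}))}}(b)$, giving
\begin{equation*}
  W_{F}(b, c_{1}, c_{2})=\sum_{y_{1}, y_{2}\in \mathbb{F}_{p^m}}\zeta_{p}^{g_{0}(y_{1}, y_{2})-Tr_{1}^{m}(c_{1}y_{1}+c_{2}y_{2})}\zeta_{p^k}^{g(h(\ell(y_{1},y_{2})))}W_{f_{h(\ell(y_{1},y_{2}))}}(b).
\end{equation*}
I would then regroup the outer sum by the value $w=\ell(y_{1},y_{2})\in \mathbb{F}_{p^m}$, so that the factors $\zeta_{p^k}^{g(h(w))}$ and $W_{f_{h(w)}}(b)$ pull out of an inner sum $S(w)=\sum_{\ell(y_{1},y_{2})=w}\zeta_{p}^{g_{0}(y_{1},y_{2})-Tr_{1}^{m}(c_{1}y_{1}+c_{2}y_{2})}$ over the level set $\ell^{-1}(w)$.

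The crux is evaluating $S(w)$, and this is where the choice $\alpha, \beta\in \{\pm z^{\frac{p^m-1}{4}}\}$ enters. Under the hypotheses one has $4\mid p^m-1$ (this is precisely why $m$ must be even when $p\equiv 3\ (mod\ 4)$), so $z^{\frac{p^m-1}{4}}$ exists and $\alpha^{2}=\beta^{2}=z^{\frac{p^m-1}{2}}=-1$; hence $\alpha^{-1}=-\alpha$ and $\beta^{-1}=-\beta$. Parametrizing $\ell^{-1}(w)$ by $y_{1}$ through $y_{2}=a^{-1}w-\alpha y_{1}$ and substituting into $g_{0}(y_{1},y_{2})=Tr_{1}^{m}(\frac{\beta}{2}(y_{1}^{2}+y_{2}^{2}))$, the identity $\alpha^{2}=-1$ makes the $y_{1}^{2}$-terms cancel, so $g_{0}$ restricted to $\ell^{-1}(w)$ becomes affine in $y_{1}$. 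Therefore $S(w)$ is an ordinary additive-character sum in $y_{1}$: it vanishes unless a single linear condition on $w$ holds, and that condition determines a unique $w_{0}=w_{0}(c_{1},c_{2})$ with $S(w_{0})=p^{m}\zeta_{p}^{\theta}$ for an explicit phase $\theta$ (a quadratic expression in $w_{0}$), while $S(w)=0$ for $w\neq w_{0}$.

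With a single surviving term the transform collapses to
\begin{equation*}
  W_{F}(b, c_{1}, c_{2})=p^{m}\zeta_{p}^{\theta}\zeta_{p^k}^{g(h(w_{0}))}W_{f_{h(w_{0})}}(b).
\end{equation*}
Since each $f_{i}$ is generalized bent, $|W_{f_{h(w_{0})}}(b)|=p^{\frac{r}{2}}$, whence $|W_{F}(b,c_{1},c_{2})|=p^{m+\frac{r}{2}}=p^{\frac{r+2m}{2}}$, so $F$ is generalized bent (note $r+2m\equiv r\ (mod\ 2)$, so the factor $\mu_{F}=\mu_{f_{h(w_{0})}}(b)$ lands in the set prescribed by (3) for dimension $r+2m$). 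To read off the dual I would substitute $W_{f_{h(w_{0})}}(b)=\mu_{f_{h(w_{0})}}(b)p^{\frac{r}{2}}\zeta_{p^k}^{f_{h(w_{0})}^{*}(b)}$ and rewrite $\zeta_{p}^{\theta}=\zeta_{p^k}^{p^{k-1}\theta}$, which gives $F^{*}(b,c_{1},c_{2})=f_{h(w_{0})}^{*}(b)+g(h(w_{0}))+p^{k-1}\theta$. Two identities, both short consequences of $\alpha^{2}=\beta^{2}=-1$, then finish the proof: first $w_{0}=-\beta(a\alpha c_{1}+ac_{2})$, so $h(w_{0})=h(-\beta\,\ell(c_{1},c_{2}))$ is exactly the claimed index; and second $\theta=Tr_{1}^{m}(\frac{\beta}{2}(c_{1}^{2}+c_{2}^{2}))=g_{0}(c_{1},c_{2})$.

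The main obstacle is the middle step: verifying that the quadratic form $g_{0}$ degenerates to an affine function along every level set of $\ell$, and then simplifying the resulting $w_{0}$ and residual phase $\theta$ to recognize them as $-\beta\,\ell(c_{1},c_{2})$ and $g_{0}(c_{1},c_{2})$. Everything hinges on $\alpha^{2}=\beta^{2}=-1$; once the $y_{1}^{2}$-cancellation is secured the character sum trivializes, and what remains is the routine but slightly delicate bookkeeping of simplifying $w_{0}^{2}$ and the residual linear term. The surjectivity of $\ell$ (from $a\neq 0$) guarantees each $\ell^{-1}(w)$ is nonempty of size $p^{m}$, and the linear condition isolating $w_{0}$ ensures exactly one term survives in the sum over $w$ for every frequency $(c_{1},c_{2})$.
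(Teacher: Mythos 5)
Your proof is correct, but it takes a genuinely different route from the paper's. The paper proves this lemma by invoking Lemma 2 (the generalized indirect sum): it sets $g_{1}(y_{1},y_{2})=g_{0}(y_{1},y_{2})-h(a\alpha y_{1}+ay_{2})$, applies the linear change of variables $z_{1}=ay_{1}+a\alpha y_{2}$, $z_{2}=a\alpha y_{1}+ay_{2}$, under which (using $\alpha^{2}=-1$) $g_{0}$ becomes $Tr_{1}^{m}(-\frac{\alpha\beta}{2a^{2}}z_{1}z_{2})$ and $g_{1}$ becomes the Maiorana--McFarland bent function $Tr_{1}^{m}(-\frac{\alpha\beta}{2a^{2}}z_{1}z_{2})-h(z_{2})$, computes $g_{0}^{*}$, $g_{1}^{*}$ and the $\mu$-values of all combinations $(1-i)g_{0}+ig_{1}$, and then reads the conclusion off equation (9). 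Your direct computation --- summing over $x$ first, then partitioning the $(y_{1},y_{2})$-sum by the level sets of $\ell(y_{1},y_{2})=a\alpha y_{1}+ay_{2}$ and observing that $\alpha^{2}=-1$ cancels the $y_{1}^{2}$ terms so that $g_{0}$ is affine on each fiber --- is the same algebraic phenomenon viewed fiberwise: your parametrization $y_{2}=a^{-1}w-\alpha y_{1}$ of $\ell^{-1}(w)$ is precisely the paper's change of variables with $z_{2}=w$ frozen. I checked your two key identities, and indeed $w_{0}=-\beta(a\alpha c_{1}+ac_{2})$ and $\theta=Tr_{1}^{m}(\frac{\beta}{2}(c_{1}^{2}+c_{2}^{2}))=g_{0}(c_{1},c_{2})$, so the collapse of the character sum gives both the bentness of $F$ and the stated formula for $F^{*}$ in one stroke (and your parity remark correctly justifies reading off the dual via (3), since $r+2m\equiv r \ (mod \ 2)$). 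What your route buys is self-containedness: no hypotheses of Lemma 2 need to be verified. What it loses is reusability: the paper's verification that $g_{0},g_{1}$ satisfy Lemma 2 is exactly what allows Theorem 2 to be applied to this construction afterwards, e.g.\ to decide when $F^{*}$ is generalized bent in terms of the $f_{i}^{*}$. One side remark of yours is wrong, though it does not affect your proof: you claim the hypotheses of Lemma 2 are ``unavailable'' because $h(a\alpha y_{1}+ay_{2})$ ``is not a difference of two bent functions.'' It is: $h(\ell)=g_{0}-g_{1}$, and $g_{1}=g_{0}-h(\ell)$ is bent for \emph{every} function $h$ precisely because of the Maiorana--McFarland structure above --- establishing this is the heart of the paper's own proof.
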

\begin{proof}
First note that $4  \mid (p^{m}-1)$. Indeed, if $p\equiv 1 \ (mod \ 4)$, then $p^m\equiv 1 \ (mod \ 4)$ and if $p\equiv 3 \ (mod \ 4)$ and $m$ is even, then $p^{m}\equiv (-1)^{m}\equiv 1 \ (mod \ 4)$, that is, $4 \mid (p^{m}-1)$.

By Lemma 2 and (9), we only need to prove that $g_{0}$ and $g_{1}$ defined by $g_{1}(y_{1}, y_{2})=g_{0}(y_{1}, y_{2})-h(a\alpha y_{1}+a y_{2}), (y_{1}, y_{2}) \in \mathbb{F}_{p^m} \times \mathbb{F}_{p^m}$ satisfy the condition of Lemma 2 and $g_{0}^{*}(y_{1}, y_{2})=g_{0}(y_{1}, y_{2})$, $g_{1}^{*}(y_{1}, y_{2})=g_{0}(y_{1}, y_{2})-h(-\beta(a\alpha y_{1}+ay_{2})), (y_{1}, y_{2})\in \mathbb{F}_{p^m}\times \mathbb{F}_{p^m}$. For any $(b_{1}, b_{2})\in \mathbb{F}_{p^m} \times \mathbb{F}_{p^m}$, we have
\begin{equation}\label{19}
  \begin{split}
      & W_{g_{1}}(b_{1}, b_{2}) \\
      & =\sum_{y_{1}, y_{2} \in \mathbb{F}_{p^m}}\zeta_{p}^{Tr_{1}^{m}(\frac{\beta}{2}(y_{1}^{2}+y_{2}^{2}))-h(a\alpha y_{1}+a y_{2})-Tr_{1}^{m}(b_{1}y_{1}+b_{2}y_{2})}\\
      & =\sum_{z_{1}, z_{2} \in \mathbb{F}_{p^m}}\zeta_{p}^{Tr_{1}^{m}(-\frac{\alpha\beta}{2a^{2}}z_{1}z_{2})-h(z_{2})-Tr_{1}^{m}(\frac{b_{1}-\alpha b_{2}}{2a}z_{1}+\frac{-\alpha b_{1}+b_{2}}{2a}z_{2})}\\
      & =W_{Tr_{1}^{m}(-\frac{\alpha\beta}{2a^{2}}z_{1}z_{2})-h(z_{2})}(\frac{b_{1}-\alpha b_{2}}{2a}, \frac{-\alpha b_{1}+b_{2}}{2a})
  \end{split}
\end{equation}
where in the second equation we use the change of variables $z_{1}=ay_{1}+a\alpha y_{2}$, $z_{2}=a\alpha y_{1}+ay_{2}$ and $\alpha^{2}=-1$.
Since $Tr_{1}^{m}(-\frac{\alpha\beta}{2a^{2}}z_{1}z_{2})-h(z_{2})$ is an Maiorana-McFarland bent function, we have
\begin{equation}\label{20}
  \begin{split}
     & W_{Tr_{1}^{m}(-\frac{\alpha\beta}{2a^{2}}z_{1}z_{2})-h(z_{2})}(\frac{b_{1}-\alpha b_{2}}{2a}, \frac{-\alpha b_{1}+b_{2}}{2a})\\
     & =p^{m}\zeta_{p}^{Tr_{1}^{m}(\frac{2a^{2}}{\alpha\beta}\cdot\frac{b_{1}-\alpha b_{2}}{2a}\cdot\frac{-\alpha b_{1}+b_{2}}{2a})-h(-\frac{2a^{2}}{\alpha\beta}\cdot\frac{b_{1}-\alpha b_{2}}{2a})}\\
     & =p^{m}\zeta_{p}^{Tr_{1}^{m}(\frac{\beta}{2}(b_{1}^{2}+b_{2}^{2}))-h(-\beta(a\alpha b_{1}+ab_{2}))}
   \end{split}
\end{equation}
where in the last equation we use $\alpha^{2}=\beta^{2}=-1$. Note that $h$ is arbitrary. Hence, by (19) and (20) we have that $g_{0}, g_{1}$ are regular bent functions and $g_{0}^{*}(y_{1}, y_{2})=g_{0}(y_{1}, y_{2})$, $g_{1}^{*}(y_{1}, y_{2})=g_{0}(y_{1}, y_{2})-h(-\beta(a\alpha y_{1}+ay_{2})), (y_{1}, y_{2})\in \mathbb{F}_{p^m}\times \mathbb{F}_{p^m}$.

By (19) and (20), $(1-i)g_{0}(y_{1}, y_{2})+ig_{1}(y_{1}, y_{2})=g_{0}(y_{1}, y_{2})-ih(a\alpha y_{1}+ay_{2})$ is a regular bent function and $((1-i)g_{0}+ig_{1})^{*}(y_{1}, y_{2})=g_{0}(y_{1}, y_{2})-ih(-\beta(a\alpha y_{1}+ay_{2}))$, $(1-i)g_{0}^{*}(y_{1}, y_{2})+ig_{1}^{*}(y_{1}, y_{2})=(1-i)g_{0}(y_{1}, y_{2})+ig_{0}(y_{1}, y_{2})-ih(-\beta(a\alpha y_{1}+ay_{2}))=g_{0}(y_{1}, y_{2})-ih(-\beta(a\alpha y_{1}+ay_{2}))$ for any $i \in \mathbb{F}_{p}$, that is, $g_{0}, g_{1}$ satisfy the condition of Lemma 2.
\end{proof}

The following Theorem gives a secondary construction of self-dual generalized bent functions.
\begin{theorem}
With the same notation as Lemma 4. The function $F$ constructed by Lemma 4 is a self-dual generalized bent function if any one of the following conditions is satisfied:

1) $p\equiv 1 \ (mod \ 4)$, $f_{i} (i \in \mathbb{F}_{p})$ are self-dual generalized bent functions satisfying $f_{i}=f_{j}$ if $i=j \beta^{a}$ for some $0\leq a \leq 3$, $h: \mathbb{F}_{p^m} \rightarrow \mathbb{F}_{p}$ is defined as $h(x)=Tr_{1}^{m}(x)$, $g: \mathbb{F}_{p}\rightarrow \mathbb{Z}_{p^k}$ is an arbitrary function satisfying $g(y)=g(y')$ if $y=y'\beta^{b}$ for some $0\leq b \leq 3$.

2) $p\equiv 1 \ (mod \ 4)$ or $m$ is even, $f_{i} (i \in \mathbb{F}_{p})$ are self-dual generalized bent functions satisfying $f_{i}=f_{-i}$, $h: \mathbb{F}_{p^m}\rightarrow \mathbb{F}_{p}$ is defined as $h(x)=Tr_{1}^{m}(x^{2})$, $g: \mathbb{F}_{p}\rightarrow \mathbb{Z}_{p^k}$ is an arbitrary function satisfying $g(y)=g(-y)$.

3) $p\equiv 1 \ (mod \ 4)$ or $m$ is even, $f_{i} (i \in \mathbb{F}_{p})$ are self-dual generalized bent functions, $h: \mathbb{F}_{p^m}\rightarrow \mathbb{F}_{p}$ is defined as $h(x)=Tr_{1}^{m}(x^{4})$, $g: \mathbb{F}_{p}\rightarrow \mathbb{Z}_{p^k}$ is an arbitrary function.
\end{theorem}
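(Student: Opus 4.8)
The plan is to reduce the self-duality $F^{*}=F$ to a matching of the function-valued and constant index terms produced by Lemma 4, and then to verify that matching case by case using the multiplicative structure of $\beta$. First I would invoke Lemma 4, which gives that $F$ is generalized bent and that
\[
F^{*}(x, y_{1}, y_{2}) = f^{*}_{h(-\beta(a\alpha y_{1} + a y_{2}))}(x) + p^{k-1} g_{0}(y_{1}, y_{2}) + g(h(-\beta(a\alpha y_{1} + a y_{2}))),
\]
whereas by construction
\[
F(x, y_{1}, y_{2}) = f_{h(a\alpha y_{1} + a y_{2})}(x) + p^{k-1} g_{0}(y_{1}, y_{2}) + g(h(a\alpha y_{1} + a y_{2})).
\]
Since Lemma 4 already yields $g_{0}^{*}=g_{0}$, the term $p^{k-1} g_{0}(y_{1}, y_{2})$ is identical in both expressions, so $F^{*}=F$ is equivalent to
\[
f^{*}_{h(-\beta w)}(x) + g(h(-\beta w)) = f_{h(w)}(x) + g(h(w))
\]
for every $x \in V_{r}$ and every $w$ in the image of the map $(y_{1}, y_{2}) \mapsto a\alpha y_{1} + a y_{2}$. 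Because $a \neq 0$ this map is surjective onto $\mathbb{F}_{p^m}$, so $w$ ranges over all of $\mathbb{F}_{p^m}$. As each $f_{i}$ is self-dual, $f^{*}_{i}=f_{i}$, and it therefore suffices to establish
\[
f_{h(-\beta w)} = f_{h(w)} \quad \text{and} \quad g(h(-\beta w)) = g(h(w)) \quad \text{for all } w \in \mathbb{F}_{p^m}.
\]

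The heart of the argument is to compute $h(-\beta w)$ in terms of $h(w)$, using $\beta^{2}=-1$ (hence $(-\beta)^{2}=-1$ and $(-\beta)^{4}=1$) from Lemma 4. In \textbf{case 3}, $h(x)=Tr_{1}^{m}(x^{4})$ gives $h(-\beta w)=Tr_{1}^{m}((-\beta)^{4}w^{4})=Tr_{1}^{m}(w^{4})=h(w)$, so both required identities hold trivially and self-duality of the $f_{i}$ is all that is needed. In \textbf{case 2}, $h(x)=Tr_{1}^{m}(x^{2})$ gives $h(-\beta w)=Tr_{1}^{m}((-\beta)^{2}w^{2})=Tr_{1}^{m}(-w^{2})=-h(w)$, so with $i=h(w)$ the two identities become $f_{-i}=f_{i}$ and $g(-i)=g(i)$, which are exactly the imposed hypotheses $f_{i}=f_{-i}$ and $g(y)=g(-y)$. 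In \textbf{case 1}, $h(x)=Tr_{1}^{m}(x)$, and here I would first observe that $p \equiv 1 \pmod 4$ forces $\beta \in \mathbb{F}_{p}$: the element $\beta=\pm z^{(p^m-1)/4}$ has multiplicative order $4$, and an order-$4$ element of $\mathbb{F}_{p^m}^{*}$ lies in $\mathbb{F}_{p}$ precisely when $4 \mid p-1$. With $\beta \in \mathbb{F}_{p}$, the $\mathbb{F}_{p}$-linearity of the trace yields $h(-\beta w)=Tr_{1}^{m}(-\beta w)=-\beta\, Tr_{1}^{m}(w)=\beta^{3} h(w)$ (since $-\beta=\beta^{3}$), so the two identities become $f_{\beta^{3} i}=f_{i}$ and $g(\beta^{3} i)=g(i)$, which follow from the orbit hypotheses $f_{i}=f_{j}$ when $i=j\beta^{a}$ and $g(y)=g(y')$ when $y=y'\beta^{b}$, taking $a=b=3$.

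I expect the main subtlety to lie in \textbf{case 1}, specifically the step $\beta \in \mathbb{F}_{p}$. This is the only place where the stronger assumption $p \equiv 1 \pmod 4$, rather than merely $m$ even (which suffices in cases 2 and 3), is genuinely used, since it is what allows the trace to pull out the scalar $\beta$ and thereby convert the additive twist $w \mapsto -\beta w$ into multiplication by $\beta^{3}$ on the $\mathbb{F}_{p}$-valued index. In cases 2 and 3 the square and fourth-power structure of $h$ absorbs $-\beta$ automatically through $(-\beta)^{2}=-1$ and $(-\beta)^{4}=1$, so no field-membership condition is required. Collecting the three cases, the displayed identity holds for all $w \in \mathbb{F}_{p^m}$, giving $F^{*}=F$ and hence that $F$ is a self-dual generalized bent function.
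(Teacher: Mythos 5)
Your proposal is correct and follows essentially the same route as the paper's proof: invoke Lemma 4 for the formula for $F^{*}$, reduce self-duality to the identities $f_{h(-\beta w)}=f_{h(w)}$ and $g(h(-\beta w))=g(h(w))$ via self-duality of the $f_{i}$, and then verify these case by case using $\beta^{2}=-1$, $\beta^{4}=1$, and (in case 1) $\beta\in\mathbb{F}_{p}$ when $p\equiv 1 \ (mod \ 4)$. Your order-theoretic justification that $\beta\in\mathbb{F}_{p}$ differs only cosmetically from the paper's direct computation $\beta^{p-1}=(z^{\frac{p-1}{4}})^{p^{m}-1}=1$, so there is no substantive divergence.
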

\begin{proof}
1) If $p\equiv 1 \ (mod \ 4)$, that is, $4 \mid (p-1)$, we have $\beta^{p-1}=(\pm z^{\frac{p^m-1}{4}})^{p-1}=(z^{\frac{p-1}{4}})^{p^m-1}=1$, that is, $\beta \in \mathbb{F}_{p}$. When $h(x)=Tr_{1}^{m}(x)$, for any $(y_{1}, y_{2})\in \mathbb{F}_{p^m}\times \mathbb{F}_{p^m}$, $h(-\beta(a\alpha y_{1}+ay_{2}))=Tr_{1}^{m}(-\beta(a\alpha y_{1}+ay_{2}))=-\beta Tr_{1}^{m}(a\alpha y_{1}+ay_{2})=-\beta h(a\alpha y_{1}+ay_{2})$. Since $f_{i} (i \in \mathbb{F}_{p})$ are self-dual generalized bent functions and $f_{i}=f_{j}$ if $i=j \beta^{a}$ for some $0\leq a \leq 3$, $\beta^{2}=-1$, we have $f_{h(-\beta(a\alpha y_{1}+ay_{2}))}^{*}=f_{h(-\beta(a\alpha y_{1}+ay_{2}))}=f_{-\beta h(a\alpha y_{1}+ay_{2})}=f_{h(a\alpha y_{1}+ay_{2})}$. Since $g(y)=g(y')$ if $y'=y \beta^{b}$ for some $0 \leq b \leq 3$ and $\beta^{2}=-1$, we have $g(h(-\beta(a\alpha y_{1}+ay_{2})))=g(-\beta h(a\alpha y_{1}+ay_{2}))=g(h(a\alpha y_{1}+ay_{2}))$. Hence, it is easy to see that the generalized bent function $F$ constructed by Lemma 4 satisfies $F=F^{*}$, that is, $F$ is a self-dual generalized bent function.

2) When $h(x)=Tr_{1}^{m}(x^{2})$, for any $(y_{1}, y_{2})\in \mathbb{F}_{p^m}\times \mathbb{F}_{p^m}$, since $\beta^{2}=-1$, we have $h(-\beta(a\alpha y_{1}+ay_{2}))=Tr_{1}^{m}((-\beta(a\alpha y_{1}+ay_{2}))^{2})=Tr_{1}^{m}(-(a\alpha y_{1}+ay_{2})^{2})=-Tr_{1}^{m}((a\alpha y_{1}+ay_{2})^{2})=-h(a\alpha y_{1}+ay_{2})$. Since $f_{i} (i \in \mathbb{F}_{p})$ are self-dual generalized bent functions and $f_{i}=f_{-i}$, we have $f_{h(-\beta(a\alpha y_{1}+ay_{2}))}^{*}=f_{h(-\beta(a\alpha y_{1}+ay_{2}))}=f_{-h(a\alpha y_{1}+ay_{2})}=f_{h(a\alpha y_{1}+ay_{2})}$. Since $g(y)=g(-y)$, we have $g(h(-\beta(a\alpha y_{1}+ay_{2})))=g(-h(a\alpha y_{1}+ay_{2}))=g(h(a\alpha y_{1}+ay_{2}))$. Hence, it is easy to see that the generalized bent function $F$ constructed by Lemma 4 satisfies $F=F^{*}$, that is, $F$ is a self-dual generalized bent function.

3) When $h(x)=Tr_{1}^{m}(x^{4})$, for any $(y_{1}, y_{2})\in \mathbb{F}_{p^m}\times \mathbb{F}_{p^m}$, since $\beta^{4}=1$, we have $h(-\beta(a\alpha y_{1}+ay_{2}))=Tr_{1}^{m}((-\beta(a\alpha y_{1}+ay_{2}))^{4})=Tr_{1}^{m}((a\alpha y_{1}+ay_{2})^{4})=h(a\alpha y_{1}+ay_{2})$. Since $f_{i} (i \in \mathbb{F}_{p})$ are self-dual generalized bent functions, we have $f_{h(-\beta(a\alpha y_{1}+ay_{2}))}^{*}=f_{h(-\beta(a\alpha y_{1}+ay_{2}))}=f_{h(a\alpha y_{1}+ay_{2})}$. For an arbitrary function $g: \mathbb{F}_{p}\rightarrow \mathbb{Z}_{p^k}$, $g(h(-\beta(a\alpha y_{1}+ay_{2})))=g(h(a\alpha y_{1}+ay_{2}))$. Hence, it is easy to see that the generalized bent function $F$ constructed by Lemma 4 satisfies $F=F^{*}$, that is, $F$ is a self-dual generalized bent function.
\end{proof}
\begin{remark}
One can verify that Theorem 3 of \cite{Cesmelioglu4} is a special case of the above case 1) with $k=1, m=1$ and $g=0$. In \cite{Cesmelioglu4}, for $p\equiv 3 \ (mod \ 4)$ and $n$ is even, the authors only illustrated that there exist ternary non-quadratic self-dual bent functions from $V_{n}$ to $\mathbb{F}_{p}$ by considering special ternary bent functions. Theorem 5 can be used to construct non-quadratic self-dual bent functions from $V_{n}$ to $\mathbb{F}_{p}$ for $p\equiv 3 \ (mod \ 4)$ and even integer $n\geq 6$ by using (non)-quadratic self-dual bent functions as building blocks.
\end{remark}

We give two examples by using Theorem 5.

\begin{example}
Let $p=5, k=2, r=1, m=1$. Let $\alpha=\beta=2$ and $a=1$. Let $f_{i}(x)=5x^{2}, x \in \mathbb{F}_{5}, i=0, 2, 3, 4$ and $f_{1}(x)=20x^{2}, x \in \mathbb{F}_{5}$. Then $f_{i} (i \in \mathbb{F}_{5})$ are self-dual generalized bent functions. Let $h(x)=Tr_{1}^{m}(x^{4})=x^{4}, x \in \mathbb{F}_{5}$. Let $g: \mathbb{F}_{5}\rightarrow \mathbb{Z}_{5^{2}}$ be defined as $g(y)=2y^{2}, y \in \mathbb{F}_{5}$. Then the generalized bent function constructed by Lemma 4 is $F(x, y_{1}, y_{2})=f_{(2y_{1}+y_{2})^{4}}(x)+5(y_{1}^{2}+y_{2}^{2})+2((2y_{1}+y_{2})^{4} mod \ 5)^{2}, (x, y_{1}, y_{2}) \in \mathbb{F}_{5}^{3}$ and it is a self-dual generalized bent function according to 3) of Theorem 5.
\end{example}

\begin{example}
Let $p=7, k=1, r=m=2$. Let $z$ be the primitive element of $\mathbb{F}_{7^2}$ with $z^{2}+6z+3=0$. Let $\alpha=\beta=z^{12}$ and $a=z$. Let $f_{0}(x)=Tr_{1}^{2}(4z^{12}x^{2})$, $f_{1}(x)=f_{6}(x)=Tr_{1}^{2}(3z^{12}x^{2})+1$, $f_{2}(x)=f_{5}(x)=Tr_{1}^{2}(3z^{12}x^{2})+2$, $f_{3}(x)=f_{4}(x)=Tr_{1}^{2}(3z^{12}x^{2})+3$, $x \in \mathbb{F}_{7^2}$. Then $f_{i} (i \in \mathbb{F}_{7})$ are quadratic self-dual bent functions. Let $h(x)=Tr_{1}^{2}(x^{2}), x \in \mathbb{F}_{7^2}$. Let $g=0$. Then the bent function constructed by Lemma 4 is $F(x, y_{1}, y_{2})=f_{Tr_{1}^{2}((z^{13}y_{1}+zy_{2})^{2})}(x)+Tr_{1}^{2}(4z^{12}(y_{1}^{2}+y_{2}^{2})), (x, y_{1}, y_{2})\in \mathbb{F}_{7^2} \times \mathbb{F}_{7^2} \times \mathbb{F}_{7^2}$. It is a self-dual bent function according to 2) of Theorem 5 and it is easy to verify that it is non-quadratic.
\end{example}
\section{Relations between the dual of generalized bent functions and the dual of bent functions}
In this section, we characterize the relations between the generalized bentness of the dual of generalized bent functions and the bentness of the dual of bent functions, as well as the self-duality relations between generalized bent functions and bent functions. The main result is the following theorem:

\begin{theorem}
Let $k\geq 2$ be an integer. Let $f: V_{n}\rightarrow \mathbb{Z}_{p^k}$ be a generalized bent function with the decomposition $f(x)=\sum_{i=0}^{k-1}f_{i}(x)p^{k-1-i}=f_{0}(x)p^{k-1}+\bar{f}_{1}(x)$ where $f_{i}$ is a function from $V_{n}$ to $\mathbb{F}_{p}$ for any $0\leq i \leq k-1$ and $\bar{f}_{1}(x)=\sum_{i=1}^{k-1}f_{i}(x)p^{k-1-i}$ is a function from $V_{n}$ to $\mathbb{Z}_{p^{k-1}}$. For any function $F: \mathbb{F}_{p}^{k-1}\rightarrow \mathbb{F}_{p}$, define $g_{f, F}=f_{0}+F(f_{1}, \dots, f_{k-1})$. Then

1) $f^{*}$ is generalized bent if and only if for any function $F: \mathbb{F}_{p}^{k-1}\rightarrow \mathbb{F}_{p}$, $g_{f, F}^{*}$ is bent.

2) $f$ is self-dual generalized bent if and only if for any function $F: \mathbb{F}_{p}^{k-1}\rightarrow \mathbb{F}_{p}$, $g_{f, F}$ is self-dual bent.
\end{theorem}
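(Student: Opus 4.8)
The plan is to reduce both parts to Lemma~1 by observing that the dual $f^{*}$ carries a $p$-adic decomposition exactly parallel to that of $f$, and that the bent function attached to $f^{*}$ in the sense of Lemma~1 coincides with the dual of the bent function $g_{f,F}$ attached to $f$. First I would record the structural data supplied by the ``furthermore'' part of Lemma~1 applied to $f$: since $f$ is generalized bent, $f^{*}(x)=f_{0}^{*}(x)p^{k-1}+\lambda(x)$ with $\lambda(x)=\sum_{i=1}^{k-1}\lambda_{i}(x)p^{k-1-i}$, and $g_{f,F}^{*}(x)=f_{0}^{*}(x)+F(\lambda_{1}(x),\dots,\lambda_{k-1}(x))$ for every $F$. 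I would also note that $g_{f,F}$ is bent for every $F$ (again by Lemma~1), so that $g_{f,F}^{*}$ is well-defined throughout and the statement makes sense.

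For part 1), I would apply the ``if and only if'' of Lemma~1 to the generalized $p$-ary function $f^{*}$ itself, whose decomposition has top digit $f_{0}^{*}$ and lower digits $\lambda_{1},\dots,\lambda_{k-1}$. This yields: $f^{*}$ is generalized bent if and only if $g_{f^{*},F}\triangleq f_{0}^{*}+F(\lambda_{1},\dots,\lambda_{k-1})$ is bent for every $F:\mathbb{F}_{p}^{k-1}\rightarrow\mathbb{F}_{p}$. But by the displayed formula for $g_{f,F}^{*}$ recorded above, $g_{f^{*},F}$ is literally equal to $g_{f,F}^{*}$, so part 1) follows at once.

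For part 2), I would use uniqueness of the $p$-adic representation of elements of $\mathbb{Z}_{p^{k}}$ to see that $f^{*}=f$ holds if and only if $f_{0}^{*}=f_{0}$ and $\lambda_{i}=f_{i}$ for all $1\leq i\leq k-1$ (matching the top digit and the lower digits separately, using $\lambda=\bar{f}_{1}$). The forward direction is then immediate: if $f^{*}=f$, then $g_{f,F}^{*}=f_{0}^{*}+F(\lambda_{1},\dots,\lambda_{k-1})=f_{0}+F(f_{1},\dots,f_{k-1})=g_{f,F}$, so every $g_{f,F}$ is self-dual bent. For the converse, assuming $g_{f,F}^{*}=g_{f,F}$ for all $F$, I would first take $F\equiv 0$ to obtain $f_{0}^{*}=f_{0}$, and then, cancelling this common term, deduce $F(\lambda_{1}(x),\dots,\lambda_{k-1}(x))=F(f_{1}(x),\dots,f_{k-1}(x))$ for every $F$ and every $x\in V_{n}$.

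The only place requiring genuine care is this last step of the converse, where one must exploit the quantifier ``for every $F$'': choosing $F$ to be the indicator function of a single point of $\mathbb{F}_{p}^{k-1}$ forces $(\lambda_{1}(x),\dots,\lambda_{k-1}(x))=(f_{1}(x),\dots,f_{k-1}(x))$ for every $x$, whence $\lambda=\bar{f}_{1}$ and $f^{*}=f$. Everything else is bookkeeping once the key identity $g_{f^{*},F}=g_{f,F}^{*}$ is in hand, and since $f^{*}=f$ automatically makes $f^{*}$ generalized bent, ``self-dual generalized bent'' is equivalent to $f^{*}=f$, which is what the digit-wise argument establishes.
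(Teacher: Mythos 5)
Your proposal is correct and follows essentially the same route as the paper: apply Lemma~1 to $f^{*}$ with the decomposition $f^{*}=f_{0}^{*}p^{k-1}+\lambda$ and identify $g_{f^{*},F}$ with $g_{f,F}^{*}$ for part 1), then match $p$-adic digits for part 2). The only cosmetic difference is in the converse of part 2), where you test with indicator functions of points of $\mathbb{F}_{p}^{k-1}$ while the paper tests with the coordinate projections $F_{i}(x_{1},\dots,x_{k-1})=x_{i}$; both choices extract $f_{i}=\lambda_{i}$ equally well.
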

\begin{proof}
1) First, by Lemma 1, $g_{f, F}$ is bent for any $F: \mathbb{F}_{p}^{k-1}\rightarrow \mathbb{F}_{p}$. And by Lemma 1, we have $f^{*}(x)=f_{0}^{*}(x)p^{k-1}+\lambda(x)$ and $g_{f, F}^{*}(x)=f_{0}^{*}(x)+F(\lambda_{1}(x), \dots, \lambda_{k-1}(x))$ where $\lambda=\sum_{i=1}^{k-1}\lambda_{i}p^{k-1-i}, \lambda_{i}: V_{n}\rightarrow \mathbb{F}_{p}$ and $\lambda: V_{n}\rightarrow \mathbb{Z}_{p^{k-1}}$ satisfies that for any $a \in V_{n}$, $\sum_{x \in V_{n}: \bar{f}_{1}(x)=\lambda(a)}\zeta_{p}^{f_{0}(x)-\langle a, x\rangle}=W_{f_{0}}(a)$ and $\sum_{x \in V_{n}:\bar{f}_{1}(x)=v}\zeta_{p}^{f_{0}(x)-\langle a, x\rangle}=0$ for any $v \neq \lambda(a)$. Hence, by Lemma 1, $f^{*}$ is generalized bent if and only if for any function $F: \mathbb{F}_{p}^{k-1}\rightarrow \mathbb{F}_{p}$, $g_{f, F}^{*}$ is bent.

2) Suppose $f$ is self-dual generalized bent, that is, $f=f^{*}$. By $f=\sum_{i=0}^{k-1}$ $f_{i}p^{k-1-i}$ and $f^{*}=f_{0}^{*}p^{k-1}+\sum_{i=1}^{k-1}\lambda_{i}p^{k-1-i}$, we have $f_{0}=f_{0}^{*}, f_{i}=\lambda_{i}, 1\leq i \leq k-1$. As $g_{f, F}^{*}=f_{0}^{*}+F(\lambda_{1}, \dots, \lambda_{k-1})$, $g_{f, F}=f_{0}+F(f_{1}, \dots, f_{k-1})=g_{f, F}^{*}$, that is, $g_{f, F}$ is self-dual bent for any function $F: \mathbb{F}_{p}^{k-1}\rightarrow \mathbb{F}_{p}$. Suppose for any function $F: \mathbb{F}_{p}^{k-1}\rightarrow \mathbb{F}_{p}$, $g_{f, F}$ is self-dual bent, that is, $g_{f, F}=g_{f, F}^{*}$ for any function $F: \mathbb{F}_{p}^{k-1}\rightarrow \mathbb{F}_{p}$. Let $F=0$, we obtain $f_{0}=f_{0}^{*}$. Let $F=F_{i}, 1\leq i \leq k-1$ where $F_{i}(x_{1}, \dots, x_{k-1})=x_{i}$, we obtain $f_{i}=\lambda_{i}, 1\leq i \leq k-1$. Hence, $f^{*}=f$, that is, $f$ is self-dual generalized bent.
\end{proof}

By Theorem 6, if $f_{0}^{*}$ is not bent (resp., $f_{0}$ is not self-dual bent), then obviously $f^{*}$ is not generalized bent (resp., $f$ is not self-dual generalized bent). But the inverses are not true. We illustrate this with the following two examples.

\begin{example}
Let $z$ be the primitive element of $\mathbb{F}_{3^5}$ with $z^{5}+2z+1=0$. Let $f: \mathbb{F}_{3^5}\times \mathbb{F}_{3} \times \mathbb{F}_{3} \rightarrow \mathbb{Z}_{3^2}$ be defined as $f=3f_{0}+f_{1}$ where $f_{i}: \mathbb{F}_{3^5}\times \mathbb{F}_{3}\times \mathbb{F}_{3}\rightarrow \mathbb{F}_{3}, i=0,1$, $f_{0}(x, y_{1}, y_{2})=Tr_{1}^{5}(x^2)+(y_{1}+Tr_{1}^{5}(z^{47}x^{2}))(y_{2}+Tr_{1}^{5}(z^{10}x^{2}))$ and $f_{1}(x, y_{1}, y_{2})=y_{2}+Tr_{1}^{5}(z^{10} x^2)$. Then $f$ is a generalized bent function constructed by Theorem 1. One can verify that $f^{*}$ is not generalized bent, but $f_{0}^{*}$ is bent.
\end{example}

\begin{example}
Let $f: \mathbb{F}_{5}^{2}\rightarrow \mathbb{Z}_{5^2}$ be defined as $f=5f_{0}+f_{1}$ where $f_{0}(x_{1}, x_{2})=x_{1}^{2}+x_{2}^{2}, (x_{1}, x_{2})\in \mathbb{F}_{5}^{2}$ and $f_{1}(x_{1}, x_{2})=2x_{1}+x_{2}$. Then $f$ is a generalized bent function and $f^{*}=5g_{0}+g_{1}$ where $g_{0}(x_{1}, x_{2})=x_{1}^{2}+x_{2}^{2}, (x_{1}, x_{2})\in \mathbb{F}_{5}^{2}$ and $g_{1}(x_{1}, x_{2})=x_{1}+3x_{2}, (x_{1}, x_{2})\in \mathbb{F}_{5}^{2}$. Hence, $f$ is not self-dual generalized bent, but $f_{0}$ is self-dual bent.
\end{example}
\section{Conclusion}
In this paper, we study the dual of generalized bent functions $f: V_{n}\rightarrow \mathbb{Z}_{p^k}$ where $V_{n}$ is an $n$-dimensional vector space over $\mathbb{F}_{p}$ and $p$ is an odd prime, $k$ is a positive integer. We give an explicit construction of generalized bent functions whose dual can be generalized bent or not generalized bent. We show that the generalized indirect sum construction method given in \cite{Wang} can provide a secondary construction of generalized bent functions for which the dual can be generalized bent or not generalized bent. For generalized bent functions $f$ whose dual $f^{*}$ is generalized bent, by ideal decomposition in cyclotomic field, we prove $f^{**}(x)=f(-x)$. For generalized bent functions $f$ which satisfy that $f(x)=f(-x)$ and its dual $f^{*}$ is generalized bent, we give a property and as a consequence, we prove that there is no self-dual generalized bent function from $V_{n}$ to $\mathbb{Z}_{p^k}$ if $p\equiv 3 \ (mod \ 4)$ and $n$ is odd. For other cases, we give a secondary construction of self-dual generalized bent functions. In the end, we characterize the relations between the generalized bentness of the dual of generalized bent functions and the bentness of the dual of bent functions, as well as the self-duality relations between generalized bent functions and bent functions.

\end{document}